\documentclass[12pt]{amsart}
\usepackage{amssymb,amsmath}

\usepackage{t1enc}
\usepackage[latin1]{inputenc}
\usepackage[german,english]{babel}
\usepackage{amsfonts}
\usepackage[all]{xy}
\usepackage{graphicx}
\usepackage{color}
\usepackage{tikz}
\usetikzlibrary{arrows,decorations.pathmorphing,backgrounds,positioning,fit,petri}

\usepackage{geometry}
\geometry{textwidth=14.5cm}

\newtheorem{defi}{Definition}[section]
\newtheorem{prop}[defi]{Proposition}

\newtheorem{coro}[defi]{Corollary}

\theoremstyle{remark}
\newtheorem{rem}[defi]{Remark}
\newtheorem{exa}[defi]{Example}

\begin{document}

\title[Key establishment for left distributive systems]{Non-associative key establishment for left distributive systems}
\author{Arkadius Kalka and Mina Teicher}
\address{Department of Mathematics,
Bar Ilan University,
Ramat Gan 52900,
Israel}
\email{Arkadius.Kalka@rub.de, teicher@math.biu.ac.il}
\urladdr{http://homepage.ruhr-uni-bochum.de/arkadius.kalka/}

\begin{abstract} 
We construct non-associative key establishment protocols for all left self-distributive (LD), multi-LD-, and other left distributive systems.
Instantiations of these protocols using generalized shifted conjugacy in braid groups 
lead to instances of a natural and apparently new group-theoretic problem, which we call the (subgroup) conjugacy coset problem.
\end{abstract} 

\subjclass[2010]{
20N02, 20F36}

\keywords{Non-commutative cryptography, key establishment protocol,
magma (grupoid), left distributive system, braid group, shifted conjugacy, conjugacy coset problem.}

\maketitle

\section{Introduction}
In an effort to construct new key establishment protocols (KEPs), which are hopefully harder to break than previously proposed non-commutative schemes, the first author introduced in his PhD thesis \cite{Ka07} (see also \cite{Ka12}) the first non-associative generalization
of the Anshel-Anshel-Goldfeld KEP \cite{AAG99}, which  revolutionized the field of \emph{non-commutative public key cryptography} (PKC) more than ten years ago.
For an introduction to non-commutative public key cryptography we refer to the book by Myasnikov et al. \cite{MSU11}. 
For further motivation and on {\it non-associative PKC} we refer to \cite{Ka12}. It turns out (see \cite{Ka12}) that in the context of AAG-like KEPs for magmas, left self-distributive systems
(LD-systems) and their generalizations (like multi-LD-systems) naturally occur. Though we constructed several examples of KEPs for non-associative 
LD- and multi-LD-systems \cite{Ka12}, we did not provide a general method to construct a KEP that works for all LD- and multi-LD-systems.
We fill this gap in the present paper. With this method at hand any LD- or multi-LD-system automatically provides a KEP - while in \cite{Ka07, Ka12}
we had to construct the key establishment functions for each example by hand. Therefore, we obtain a rich variety of new non-associatiave KEPs
coming from LD-, multi-LD-, and other left distributive systems.  
\par Instantiations of the proposed KEPs with concrete parameter values are left for future works \cite{KT13}. 

\medskip

\par {\bf Outline.} In section 2 we review LD-, multi-LD-, and other left distributive systems with many examples.  
Section 3 describes a KEP for all LD-systems with a discussion of related base problems. In section 4 we describe and analyze a KEP which 
does not only apply for all multi-LD-systems, but also for a big class of partial multi-LD-systems. 
Finally, in section 5 we discuss instantiations of these general protocols using generalized shifted conjugacy in braid groups. 
An associated base problem leads to an instance of a natural and apparently new group-theoretic problem, which we call the (subgroup) conjugacy coset problem.
 
\section{LD-systems and other distributive systems}

\subsection{Definitions}
\begin{defi} 
An {\it LD-system} $(S,*)$ is a set $S$ equipped with a binary operation $*$ on $S$ which satisfies the {\it left self-distributivity law} 
\[ x*(y*z)=(x*y)*(x*z) \quad {\rm for} \,\, {\rm all} \,\, x,y,z\in S. \] 
\end{defi}

\begin{defi} (Section X.3. in \cite{De00})
Let $I$ be an index set. A \emph{multi-LD-system}
 $(S,(*_i)_{i\in I})$ is a set $S$ equipped with a family of binary operations $(*_i)_{i\in I}$ on $S$ such that 
\[ x*_i(y*_jz)=(x*_iy)*_j(x*_iz) \quad {\rm for} \,\, {\rm all} \,\, x,y,z\in S \]
is satisfied for every $i,j$ in $I$. Especially, it holds for $i=j$, i.e., $(S,*_i)$ is an LD-system. If $|I|=2$ then we call $S$ a \emph{bi-LD-system}. 
\end{defi}

More vaguely, we will also use the terms \emph{partial multi-LD-system} and simply \emph{left distributive system} if the laws of a multi-LD-system
are only fulfilled for special subsets of $S$ or if only some of these (left) distributive laws are satisfied. 

\medskip

We begin with some examples of LD-systems taken from \cite{De06}. 

\medskip

{\bf 1.} We begin with a trivial example. $(S,*)$ with $x*y=f(y)$ is an LD-system for any function $f: S\rightarrow S$. 

\medskip

{\bf 2.} A set $S$ with a binary operation $*$, that satisfies no other relations than those resulting from the left self-distributivity law, is a free LD-system. Free LD-systems are studied extensively in \cite{De00}. 

\medskip

{\bf 3.} A classical example of an LD-system is $(G,*)$ where $G$ is a group equipped with the conjugacy operation $x*y=x^{-1}yx$ (or $x*^{\rm rev}y=xyx^{-1}$).
 Note that such an LD-system cannot be free, because conjugacy satisfies additionally the idempotency law $x*x=x$. 

\medskip

{\bf 4.} Finite groups equipped with the conjugacy operation are not the only finite LD-systems. Indeed, the socalled \emph{Laver tables} provide the classical example for finite LD-systems. 
There exists for each $n\in \mathbb{N}$ an unique LD-system $L_n=(\{ 1, 2, \ldots , 2^n \},*)$ with $k*1=k+1$.
The values for $k*l$ with $l\ne 1$ can be computed by induction using the left self-distributive law.
The Laver tables for $n=1,2,3$ are
\[ \begin{tabular}{c|cc} $L_1$ &1&2 \\ \hline 1&2&2 \\ 2&1&2 \end{tabular} \quad
\begin{tabular}{c|cccc} $L_2$ &1&2&3&4 \\ \hline 1&2&4&2&4 \\ 2&3&4&3&4 \\ 3&4&4&4&4 \\ 4&1&2&3&4 \end{tabular} \quad 
\begin{tabular}{c|cccc cccc} $L_3$ &1&2&3&4&5&6&7&8 \\ \hline 1&2&4&6&8&2&4&6&8 \\ 2&3&4&7&8&3&4&7&8 \\ 3&4&8&4&8&4&8&4&8 \\ 4&5&6&7&8&5&6&7&8 \\
                                                      5&6&8&6&8&6&8&6&8 \\           6&7&8&7&8&7&8&7&8 \\ 7&8&8&8&8&8&8&8&8 \\ 8&1&2&3&4&5&6&7&8
\end{tabular}  \]
Laver tables are also described in \cite{De00}. 
\par Many examples for LD-, bi-LD- and multi-LD-systems are given in Dehornoy's monograph \cite{De00}.

\subsection{$f$-conjugacy} 

\medskip

One may consider several generalizations of the conjugacy operation as candidates for natural LD-operations in groups.
Consider an ansatz like $x*y=f(x^{-1})g(y)h(x)$ for some group endomorphisms $f,g,h$.

\begin{prop} \label{PropLDConj}
Let $G$ be a group, and $f,g,h \in End(G)$. Then the binary operation
$x*y=f(x^{-1}) \cdot g(y)\cdot h(x)$ yields an LD-structure on $G$ if and only if 
\begin{equation} \label{fConjLDeqs} fh=f, \quad gh=hg=hf, \quad fg=gf=f^2, \quad h^2=h. \end{equation}
\end{prop}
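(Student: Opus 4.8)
The plan is to write out the left self-distributivity identity $x*(y*z)=(x*y)*(x*z)$ explicitly, using only that $f,g,h$ are endomorphisms, and then to compare the two resulting words in the images of $x,y,z$ under the various compositions of $f,g,h$.

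First I would expand the left-hand side: from $y*z=f(y^{-1})g(z)h(y)$ and the homomorphism property of $g$,
\[ x*(y*z)=f(x^{-1})\,gf(y^{-1})\,g^2(z)\,gh(y)\,h(x). \]
For the right-hand side, put $u=x*y$ and $v=x*z$; then $u^{-1}=h(x^{-1})g(y^{-1})f(x)$, so $f(u^{-1})=fh(x^{-1})\,fg(y^{-1})\,f^2(x)$, $g(v)=gf(x^{-1})\,g^2(z)\,gh(x)$ and $h(u)=hf(x^{-1})\,hg(y)\,h^2(x)$, and therefore
\[ (x*y)*(x*z)=fh(x^{-1})\,fg(y^{-1})\,f^2(x)\,gf(x^{-1})\,g^2(z)\,gh(x)\,hf(x^{-1})\,hg(y)\,h^2(x). \]

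For the sufficiency direction I would substitute the four relations into the right-hand side: $fh=f$ turns the leading $fh(x^{-1})$ into $f(x^{-1})$; the identities $gf=f^2$ and $hf=gh$ make the blocks $f^2(x)\,gf(x^{-1})$ and $gh(x)\,hf(x^{-1})$ collapse to $1$; $fg=gf$ rewrites $fg(y^{-1})$ as $gf(y^{-1})$ and $hg=gh$ rewrites $hg(y)$ as $gh(y)$; and $h^2=h$ turns the trailing $h^2(x)$ into $h(x)$. What remains is exactly the left-hand side, so $(S,*)$ is LD. This is a routine, if somewhat lengthy, cancellation.

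The necessity direction is the part that really requires care, and I expect the bookkeeping there to be the main obstacle. The idea is to feed special values into the expanded identity. Setting $x=1$ annihilates every term carrying an $x$ and leaves the two-variable identity $gf(y^{-1})\,g^2(z)\,gh(y)=fg(y^{-1})\,g^2(z)\,hg(y)$; specializing further $z=1$, $y=1$, or $z=y=1$ in the full identity strips it down to shorter identities from which the remaining relations $fh=f$, $h^2=h$ and the coincidences $fg=gf=f^2$, $gh=hg=hf$ can be read off. The subtle point is that one is disassembling a single word identity into exactly four relations, and the substitutions must be chosen so that the factors in question are forced to be trivial rather than merely to centralize the image of $g^2$; in effect they are reverse-engineered from the cancellations used for sufficiency. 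Once one checks that the relations so extracted are honest identities of endomorphisms, the two directions together yield the claimed equivalence.
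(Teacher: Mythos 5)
Your two expansions coincide verbatim with the ones in the paper's proof, and your sufficiency argument---cancelling $f^2(x)\,gf(x^{-1})$ and $gh(x)\,hf(x^{-1})$ via $gf=f^2$ and $hf=gh$, then rewriting the remaining factors via $fh=f$, $fg=gf$, $hg=gh$, $h^2=h$---is exactly the content of the paper's one-line ``comparison of both terms'' in that direction. So the ``if'' half is complete and is, in substance, the paper's proof.

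The gap is in the ``only if'' half, and it is not mere bookkeeping: the specialization strategy you describe cannot succeed. Setting $x=1$ does give $gf(y^{-1})\,g^2(z)\,gh(y)=fg(y^{-1})\,g^2(z)\,hg(y)$, and $z=1$ reduces this to $gf(y)^{-1}gh(y)=fg(y)^{-1}hg(y)$; but this is an identity between \emph{products} and does not split into the separate relations $gf=fg$ and $gh=hg$ without some independence that a general group does not supply. Indeed, for a fixed group $G$ the ``only if'' direction is simply false: on $G=\mathbb{Z}$ with $f=h=\mathrm{id}$ and $g$ the doubling endomorphism, $x*y=-x+2y+x=2y$ is LD (it is the trivial LD-system $x*y=g(y)$), yet $gh=hf$ would force $g=\mathrm{id}$. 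So no choice of substitutions can ``read off'' the relations (\ref{fConjLDeqs}) from the identity in an arbitrary $G$. The equivalence is only tenable in the formal reading that the paper's terse ``comparison'' silently adopts: the two expanded words agree \emph{as words in the independent letters} $fh(x^{-1}),fg(y^{-1}),\dots$ exactly when the relations hold (equivalently, the identity holds for generic/free data). Your write-up should either adopt that formal reading explicitly or add a nondegeneracy hypothesis; as planned, the necessity step will not close.
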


\begin{proof}. A straightforward computation yields
\begin{eqnarray*}
\alpha * (\beta * \gamma )&=&f(\alpha ^{-1}) gf(\beta ^{-1}) g^2(\gamma ) gh(\beta ) h(\alpha ), \quad {\rm and} \\
(\alpha * \beta)*(\alpha *\gamma )&=& fh(\alpha ^{-1}) fg(\beta ^{-1}) f^2(\alpha ) gf(\alpha ^{-1}) g^2(\gamma ) gh(\alpha ) hf(\alpha ^{-1}) \cdot \\
&& hg(\beta ) h^2(\alpha ).
\end{eqnarray*}
A comparison of both terms yields the assertion.
\end{proof}

The simplest solution of the system of equations (\ref{fConjLDeqs}) is $f=g$ and $h={\rm id}$. This leads to the following definition. 
\begin{defi} ({\sc LD- or $f$-conjugacy}) \label{fConj}
Let $G$ be a group, and $f\in End(G)$. An ordered pair $(u, v)\in G \times G$ is called
$f$-LD-conjugated or LD-conjugated, or simply $f$-conjugated, denoted by $u\stackrel{}{\longrightarrow }_{*_f}v$, if there exists a $c\in G$ such that
$v=c*_f u=f(c^{-1}u)c$.
\end{defi}

\begin{rem}
For any non-trivial endomorphism $f$, the relation $\stackrel{}{\longrightarrow }_{*_f}$ defines not an equivalence relation on $G$.
Even the relation $\stackrel{}{\longrightarrow }_{*}$, defined by $u\stackrel{}{\longrightarrow }_*v$ if and only if there exists  an $f \in Aut(G)$ s.t. $u\stackrel{}{\longrightarrow }_{*_f}v$, is not an equivalence relation. Indeed, transitivity requires the automorphisms (relation must be symmetric!) to be an idempotent endomorphism ($f^2=f$) which implies $f={\rm id}$. 
\par
Compare the notion of $f$-LD-conjugacy with the well known notion {\it $f$-twisted conjugacy} defined by $u \sim _f v$ (for $f\in Aut(G)$) if and only if
there exists a $c\in G$ s.t. $v=f(c^{-1})uc=:c*^{tw}_f u$, which yields indeed an equivalence relation.
On the other hand, the operation $*^{tw}=*^{tw}_f$ is not LD - rather it satisfies the following "near" LD-law:
\[ \alpha *^{tw} (\beta *^{tw} \gamma )=(\alpha *^{tw} \beta)*^{tw}(\alpha ^f *^{tw}\gamma ), \]
where $\alpha ^f$ is short for $f(\alpha )$.  \par
Anyway, it follows directly from the definitions that $u\stackrel{}{\longrightarrow }_*v$ if and only if $f(u) \sim _f v$, i.e., any $f$-LD conjugacy problem
reduces to a twisted conjugacy problem and vice versa. Here we have to extend the notion of twisted conjugacy from $f\in Aut(G)$ to all $f\in End(G)$. 
\end{rem}

\begin{exa}
Recall that the $n$-strand braid group $B_n$ is generated by $\sigma _1$, ..., $\sigma _{n-1}$ where inside $\sigma _i$ the $(i+1)$-th
strand crosses over the $i$-th strand. There exists a natural epimorphism from $B_n$ onto the symmetric group $S_n$, defined by $\sigma _i \mapsto (i,i+1)$.
Let $G$ be the kernel of this epimorphism, namely the $n$-strand pure braid group $P_n$.
For some small integer $d\ge 1$, consider the epimorphism $\eta _d: P_n \longrightarrow P_{n-d}$ given by "pulling out" (or erasing) the last $d$ strands, i.e. the strands $n-d+1, \ldots , n$. Consider the shift map $\partial : B_{n-1} \longrightarrow B_n$, defined by $\sigma _i \mapsto \sigma _{i+1}$,
 and note that $\partial ^d (P_{n-d}) \le P_n$. Now, we define the endomorphism $f: P_n \longrightarrow P_n$ by the composition $f=\partial ^d \circ \eta _d$. 
\end{exa} 

\subsection{Shifted conjugacy}
Patrick Dehornoy introduced the following generalization of $f$-conjugacy, and he points out, that once the definition of shifted conjugacy is used, braids inevitably appear \cite{De00,De06}.

\begin{prop} \label{ExI.3.20.} {\rm (Exercise I.3.20. in \cite{De00})}
Consider a group $G$, a homomorphism $f: G\rightarrow G$, and a fixed element $a\in G$. Then the binary operation
\[ x*y=x *_{f,a}y=f(x)^{-1} \cdot a \cdot f(y)\cdot x  \]
yields an LD-structure on $G$ if and only if $[a,f^2(x)]=1$ for all $x\in G$, and $a$ satisfies the relation $a f(a) a=f(a) a f(a)$. \par
Hence the subgroup $H=\langle \{ f^n(a) \mid  n\in \mathbb{N} \} \rangle $ of $G$ is a homomorphic image of
the braid group
\[ B_{\infty }= \langle \{\sigma _i \}_{i\ge 1} \mid \sigma _i\sigma _j=\sigma _j\sigma _i \,\,{\rm for} \,\, |i-j|\ge 2, \,\, 
\sigma _i\sigma _j\sigma _i=\sigma _j\sigma _i\sigma _j \,\, {\rm for} \,\, |i-j|=1\rangle \] 
with infinitely many strands, i.e., up to an isomorphism, it is a quotient of $B_{\infty }$. 
\end{prop}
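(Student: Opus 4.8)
The plan is to treat the two assertions in turn. For the characterization of when $*=*_{f,a}$ is left self-distributive, I would argue exactly as in the proof of Proposition \ref{PropLDConj}: expand both $\alpha*(\beta*\gamma)$ and $(\alpha*\beta)*(\alpha*\gamma)$ from the definition $x*y=f(x)^{-1}af(y)x$, using that $f$ is an endomorphism. Since $x*y$ is a word in $f(x),f(y),x,a$, applying $f$ once turns it into a word in $f^2(x),f^2(y),f(x),f(a)$, so both iterated products become explicit words in $a,f(a)$, the $f^2$-images $f^2(\alpha),f^2(\beta),f^2(\gamma)$, the $f$-images $f(\alpha),f(\beta),f(\gamma)$, and $\alpha,\beta,\gamma$ themselves. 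After cancelling the common left factor $f(\alpha)^{-1}$ and the common right factor $f(\beta)\alpha$, the LD-law reduces to the single identity $a\,f^2(\beta)^{-1} f(a) f^2(\gamma) = f^2(\beta)^{-1} f(a)^{-1} f^2(\alpha)\, a\, f^2(\alpha)^{-1} f(a) f^2(\gamma)\, a$, required to hold for all $\alpha,\beta,\gamma\in G$.

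I would then extract the two stated conditions by specialization. Putting $\alpha=\beta=\gamma=1$ collapses the identity to $af(a)a=f(a)af(a)$, the second condition. Substituting this relation back in and setting $\beta=\gamma=1$ isolates $f^2(\alpha)$ commuting with $a$, i.e.\ $[a,f^2(\alpha)]=1$ for all $\alpha$. Conversely, granted both relations, I would use the commutation $af^2(x)=f^2(x)a$ to move the occurrences of $a$ past all $f^2$-factors, reducing each side to a short word in $a$ and $f(a)$ that collapses by a single application of $af(a)a=f(a)af(a)$; checking that the two sides then literally agree is mechanical. The only place needing a little care is choosing the specializations so that the two relations fall out one at a time rather than entangled.

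For the second assertion I would exhibit $H$ as a homomorphic image of $B_\infty$ directly. Define $\varphi: B_\infty\to G$ on generators by $\varphi(\sigma_i)=f^{i-1}(a)$ and verify the defining relations. Applying the endomorphism $f^{i-1}$ to $af(a)a=f(a)af(a)$ gives $f^{i-1}(a)f^{i}(a)f^{i-1}(a)=f^{i}(a)f^{i-1}(a)f^{i}(a)$, which is $\varphi(\sigma_i\sigma_{i+1}\sigma_i)=\varphi(\sigma_{i+1}\sigma_i\sigma_{i+1})$. For $|i-j|\ge 2$, say $j\ge i+2$, write $f^{j-1}(a)=f^{i+1}\bigl(f^{\,j-i-2}(a)\bigr)$; applying $f^{i-1}$ to $[a,f^2(x)]=1$ (valid for all $x\in G$) with $x=f^{\,j-i-2}(a)$ shows $[f^{i-1}(a),f^{j-1}(a)]=1$, i.e.\ $\varphi(\sigma_i)$ and $\varphi(\sigma_j)$ commute. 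Hence $\varphi$ is a well-defined homomorphism, and its image is $\langle f^{i-1}(a):i\ge 1\rangle=\langle f^n(a):n\in\mathbb{N}\rangle=H$, so $H\cong B_\infty/\ker\varphi$ is a quotient of $B_\infty$.

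The main obstacle is entirely bookkeeping: keeping the long non-commutative words in the first part free of index and inversion errors, and performing the specializations in an order that cleanly separates the two conditions. Once part one is in place the second part is essentially automatic, the point being that powers of $f$ convert the two relations forced on $a$ into precisely the far-commutation and braid relations defining $B_\infty$.
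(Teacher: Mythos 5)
Your proof is correct and takes essentially the same route as the paper, which obtains this proposition as the $|I|=1$ case of the expand-and-compare computation in the proof of Proposition~\ref{multiLD} and leaves the extraction of the two conditions and the homomorphism $\sigma_i\mapsto f^{i-1}(a)$ implicit. Your specializations ($\alpha=\beta=\gamma=1$, then $\beta=\gamma=1$) and the explicit verification of the braid and far-commutation relations are exactly the details the paper omits.
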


There exists a straightforward generalization of Proposition \ref{ExI.3.20.} for multi-LD-systems:

\begin{prop} \label{multiLD} Let $I$ be an index set. Consider a group $G$, a family of endomorphisms $(f_i)_{i\in I}$ of $G$, and a set of fixed elements $\{a_i\in G \mid i\in I\}$. 
Then $(G,(*_i)_{i\in I})$ with
\[ x*_iy= f_i(x^{-1})\cdot a_i \cdot f_i(y)\cdot x\]
is a multi-LD-system if and only if $f_i=f_j=:f$ for all $i\ne j$, $[a_i,f^2(x)]=1$ for all $x\in G$, $i\in I$, and $a_if(a_i)a_j=f(a_j)a_if(a_i)$ for all $i,j\in I$. 
\end{prop}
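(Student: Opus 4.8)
The plan is to proceed exactly as in the proof of Proposition \ref{ExI.3.20.}: expand both sides of the mixed left self-distributivity law $x *_i (y *_j z) = (x *_i y) *_j (x *_i z)$ into explicit group words, using only that each $f_k \in \mathrm{End}(G)$ distributes over products and commutes with inversion, and then compare the two words. Substituting $y *_j z = f_j(y^{-1})\, a_j\, f_j(z)\, y$ and distributing $f_i$ gives
\[ x *_i (y *_j z) = f_i(x^{-1})\, a_i\, f_if_j(y^{-1})\, f_i(a_j)\, f_if_j(z)\, f_i(y)\, x, \]
a word in $x,y,z,a_i,a_j$ and their images under $f_i$, $f_j$ and the composite $f_if_j$; expanding $(x *_i y) *_j (x *_i z) = f_j(u^{-1})\, a_j\, f_j(v)\, u$ with $u = x *_i y$ and $v = x *_i z$ yields, after the analogous distribution, a word involving instead $f_i$, $f_j$ and $f_jf_i$.

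For the ``if'' direction I would substitute the hypothesis $f_i = f_j =: f$ into both words, so that each becomes a word in $x,y,z,a_i,a_j$ and their images under $f$ and $f^2$. The relations $[a_i, f^2(x)] = 1$ then let every occurrence of $f^2(x)$, $f^2(y)$, $f^2(z)$ be pushed across the letters $a_i^{\pm1}$ and $a_j^{\pm1}$, which makes the ``variable'' parts of the two words cancel against one another; what remains is a short relation among $a_i,a_j,f(a_i),f(a_j)$, and $a_i f(a_i) a_j = f(a_j)\, a_i f(a_i)$ --- whose diagonal instance $a_i f(a_i) a_i = f(a_i) a_i f(a_i)$ is the case $i=j$ --- is precisely what is needed to identify the two sides. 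This is, in effect, the computation of Proposition \ref{ExI.3.20.} carried out off the diagonal.

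For the ``only if'' direction I would specialize the law to read off the conditions. Putting $x = y = z = 1$ collapses it to $a_i f_i(a_j) = f_j(a_i^{-1})\, a_j\, f_j(a_i)\, a_i$, equivalently $f_j(a_i)\, a_i\, f_i(a_j) = a_j\, f_j(a_i)\, a_i$; the case $i=j$ is the braid-type relation, and once $f_i = f_j$ is known this rearranges into $a_i f(a_i) a_j = f(a_j)\, a_i f(a_i)$. Keeping two of $x,y,z$ trivial and letting the third vary gives, after using the relation just obtained, the conjugation identity $f_if_j(w) = a_i^{-1}\, f_jf_i(w)\, a_i$ for every $w \in G$; the diagonal case $i=j$ of this is $[a_i, f_i^2(w)] = 1$. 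Feeding this conjugation identity back into the law with all three variables free, together with the information supplied by Proposition \ref{ExI.3.20.} in the diagonal case, then forces $f_i = f_j$.

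The step I expect to be the main obstacle is this last one, producing $f_i = f_j$: the braid-type and commutation relations fall out at once from low-degree specializations, whereas the coincidence of the endomorphisms emerges only after the (long) right-hand word has been simplified by the other two relations and its residual dependence on $x$ is matched against that of the left-hand side. Everything else is lengthy but mechanical, entirely parallel to the computation proving Proposition \ref{ExI.3.20.}.
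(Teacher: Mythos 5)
Your expansion of the two sides is exactly the computation the paper records (your two words agree verbatim with the paper's displayed formulas), so the overall strategy --- expand and compare --- is the same as the paper's. For the ``if'' direction the comparison does work essentially as you describe: after setting $f_i=f_j=f$ and pushing $f^2(x),f^2(y),f^2(z)$ across the $a_k^{\pm1}$, both sides reduce to $f(x^{-1})f^2(y^{-1})\,W\,f^2(z)f(y)x$ with $W=a_if(a_j)$ on the left and $W=f(a_i^{-1})a_jf(a_i)a_i$ on the right, so the residual condition is $f(a_i)\,a_i\,f(a_j)=a_j\,f(a_i)\,a_i$ --- precisely the relation you extract from the specialization $x=y=z=1$. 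But your claim that this ``rearranges into'' $a_if(a_i)a_j=f(a_j)a_if(a_i)$ is not a valid rearrangement for $i\neq j$: the first relation says $a_j$ is the conjugate $g\,f(a_j)\,g^{-1}$ with $g=f(a_i)a_i$, the second that $a_j=h^{-1}f(a_j)h$ with $h=a_if(a_i)$, and these coincide only on the diagonal $i=j$, where both collapse to the braid relation of Proposition \ref{ExI.3.20.}. So you must either state the condition in the form the computation actually delivers or supply an argument that the two families of relations are equivalent given $[a_k,f^2(x)]=1$ (they are not, in general, and the other commutations do not bridge the gap).

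The second, more serious gap is the necessity of $f_i=f_j$, which you flag as the main obstacle and only sketch. That step cannot be completed as planned, because the multi-LD law does not force $f_i=f_j$: take $G=\mathbb{Z}^2$, $a_1=a_2=0$, and $f_1,f_2$ the two coordinate projections; then $x*_iy=f_i(y-x)+x$ satisfies $x*_i(y*_jz)=(x*_iy)*_j(x*_iz)$ for all $i,j$, yet $f_1\neq f_2$. Matching the residual dependence on the variables only yields weaker conclusions, such as $f_if_j=f_jf_i$ in this degenerate case, or in general the conjugation identity $f_if_j(w)=a_i^{-1}f_jf_i(w)a_i$ that you yourself record; extracting $f_i=f_j$ from these requires additional nondegeneracy hypotheses that neither you nor the statement supplies. (The paper's own proof stops at ``a comparison of both terms yields the assertion,'' so this defect is inherited from the statement, but a complete write-up has to confront it.) Everything else --- the expansions, the extraction of the commutation relations, and the diagonal braid relation --- is sound and entirely parallel to Proposition \ref{ExI.3.20.}, as you say.
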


\begin{proof} A straightforward computation gives
\begin{eqnarray*} x*_i(y*_jz)&=&f_i(x^{-1})a_i [f_i(f_j(y^{-1})) f_i(a_j) f_i(f_j(z)) f_i(y)]x, \\
(x*_iy)*_j(x*_iz)&=& [f_j(x^{-1})  f_j(f_i(y^{-1}))  f_j(a_i^{-1})  f_j(f_i(x))]  a_j   [ f_j(f_i(x^{-1})) \cdot \\
 && f_j(a_i)  f_j(f_i(z))  f_j(x)][ f_i(x^{-1})  a_i f_i(y)  x].
\end{eqnarray*}
A comparison of both terms yields the assertion. 
\end{proof}

Note that this proof also contains proofs of Proposition \ref{ExI.3.20.} (setting $|I|=1$) and of the following Corollary \ref{ShConj} (setting $G=B_{\infty }$, $I=\{1,2\}$, $s=\partial $, $*_1=*$, $*_2=\bar{*}$, $a_1=\sigma _1$ and $a_2=\sigma _1^{-1}$). 

\medskip

Consider the injective {\it shift endomorphism} $\partial : B_{\infty } \longrightarrow B_{\infty }$ defined by $\sigma _i \mapsto \sigma _{i+1}$ for all $i\ge 1$.
\begin{coro} \label{ShConj} {\sc (Shifted conjugacy}, {\rm Example X.3.5. in \cite{De00})} 
$B_{\infty }$ equipped with the {\rm shifted conjugacy} operations $*$, $\bar{*}$ defined by
\[ x*y=\partial x^{-1}\cdot \sigma _1 \cdot \partial y \cdot x, \quad \quad  x\, \bar{*}\, y=\partial x^{-1}\cdot \sigma _1^{-1} \cdot \partial y  \cdot x  \]
is a bi-LD-system. In particular, $(B_{\infty },*)$ is an LD-system.
\end{coro}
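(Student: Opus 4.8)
The plan is to derive the corollary directly from Proposition \ref{multiLD}, exactly as announced in the remark following its proof: take $G=B_{\infty}$, $I=\{1,2\}$, $f_1=f_2=\partial$, $a_1=\sigma_1$, $a_2=\sigma_1^{-1}$, so that $*_1=*$ and $*_2=\bar{*}$. With these choices it remains only to check the three conditions of Proposition \ref{multiLD}: (i) $f_1=f_2$, which is trivially true since both equal $\partial$; (ii) $[a_i,\partial^2(x)]=1$ for all $x\in B_{\infty}$ and $i\in\{1,2\}$; and (iii) $a_i\,\partial(a_i)\,a_j=\partial(a_j)\,a_i\,\partial(a_i)$ for all $i,j\in\{1,2\}$.

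For (ii) I would note that $\partial^2$ maps $B_{\infty}$ into the subgroup generated by $\{\sigma_k : k\ge 3\}$; since $|1-k|\ge 2$ for every such $k$, the defining relations $\sigma_i\sigma_j=\sigma_j\sigma_i$ (for $|i-j|\ge 2$) show that $\sigma_1^{\pm 1}$ commutes with each $\sigma_k$, $k\ge 3$, hence with every element of that subgroup. This is precisely (ii) for $a_1=\sigma_1$ and for $a_2=\sigma_1^{-1}$.

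For (iii), since $\partial(\sigma_1)=\sigma_2$, the four instances of the required identity become statements about $\sigma_1,\sigma_2$ alone. The case $i=j=1$ is the braid relation $\sigma_1\sigma_2\sigma_1=\sigma_2\sigma_1\sigma_2$; the case $i=j=2$ is its inverse $\sigma_1^{-1}\sigma_2^{-1}\sigma_1^{-1}=\sigma_2^{-1}\sigma_1^{-1}\sigma_2^{-1}$; and the mixed cases $(i,j)=(1,2)$ and $(2,1)$ read $\sigma_1\sigma_2\sigma_1^{-1}=\sigma_2^{-1}\sigma_1\sigma_2$ and $\sigma_1^{-1}\sigma_2^{-1}\sigma_1=\sigma_2\sigma_1^{-1}\sigma_2^{-1}$. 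Each mixed identity rearranges, after multiplying through by a single generator on one side and (for the second) taking inverses, to the same braid relation $\sigma_1\sigma_2\sigma_1=\sigma_2\sigma_1\sigma_2$. Thus (i)--(iii) all hold, Proposition \ref{multiLD} applies, and $(B_{\infty},*,\bar{*})$ is a bi-LD-system; taking $i=j=1$ in the bi-LD law then gives that $(B_{\infty},*)$ is an LD-system.

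The only place demanding care is the bookkeeping in the two mixed cases of (iii): one must track the inverses correctly when reducing the asymmetric-looking identities to the symmetric braid relation. This is a short but error-prone step, and it is the main (mild) obstacle. An alternative route would bypass Proposition \ref{multiLD} altogether and verify the left self-distributive law for $*$ and $\bar{*}$ by direct substitution, but invoking Proposition \ref{multiLD} is cleaner and avoids repeating the computation already done there.
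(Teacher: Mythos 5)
Your proof is correct and follows exactly the route the paper itself indicates: the remark after Proposition \ref{multiLD} states that the corollary is obtained by setting $G=B_{\infty}$, $I=\{1,2\}$, $f=\partial$, $a_1=\sigma_1$, $a_2=\sigma_1^{-1}$ in that proposition. Your explicit verification of conditions (ii) and (iii), including the two mixed instances of $a_i\partial(a_i)a_j=\partial(a_j)a_i\partial(a_i)$, is accurate and merely fills in details the paper leaves implicit.
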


\subsection{Generalized shifted conjugacy in braid groups}

In the following we consider generalizations of the shifted conjugacy operations $*$ in $B_{\infty }$. Therefore we
set $s=\partial ^p$ for some $p\in \mathbb{N}$, and we choose $a_i\in B_{2p}$ for all $i\in I$ such that 
\begin{equation} a_i\partial ^p(a_i)a_j=\partial ^p(a_j)a_i\partial ^p(a_i) \quad {\rm for} \,\, {\rm all} \,\,  i,j\in I.  \end{equation}  
Since $a_i\in B_{2p}$, we have $[a_i,\partial ^{2p}(x)]=1$ for all $x\in B_{\infty }$.
Thus the conditions of Proposition \ref{multiLD} are fulfilled, and $x*_iy=x\partial ^p(y)a_i\partial ^p(x^{-1})$ defines a multi-LD-structure on $B_{\infty}$. 
For $|I|=1$, $p=1$ and $a=\sigma _1$, which implies $H=B_{\infty }$, we get Dehornoy's original definition of shifted conjugacy $*$. 
\par
It remains to give some natural solutions $\{a_i\in B_{2p} \mid i\in I \}$ of the equation set (1). Note that in case $|I|=1$ (notation: $a_1=a$), of course, every
endomorphism $f$ of $B_{\infty }$ with $f(\sigma _1) \in B_{2p}$ provides such solution $a=f(\sigma _1)$. 

\begin{defi} (Definition I.4.6. in \cite{De00}) Let, for $n\ge 2$,  $\delta _n=\sigma _{n-1}\cdots \sigma _2\sigma _1$. For $p, q \ge 1$, we set
\[ \tau _{p,q}=\delta _{p+1}\partial (\delta _{p+1})\cdots \partial ^{q-1}(\delta _{p+1}). \]
\end{defi}

Since $a=\tau _{p,p}^{\pm 1}\in B_{2p}$ fulfills $a\partial ^p(a)a=\partial ^p(a)a\partial ^p(a)$, it provides a lot of (multi)-LD-structures on $B_{\infty }$.  

\begin{prop} \label{abc} (a) The binary operation $x*_ay=\partial ^p(x^{-1})a\partial ^p(y)x$ with $a=a'\tau _{p,p}a''$ for some $a',a''\in B_p$ yields an LD-structure on $B_{\infty }$ 
if and only if $[a',a'']=1$. 
\par
\noindent (b) Let $I$ be an index set.
The binary operations $x*_iy=\partial ^p(x^{-1})a_i\partial ^p(y)x$ with $a_i=a'_i\tau _{p,p}a''_i$ for some $a'_i,a''_i\in B_p$ ($i\in I$) yields a multi-LD-structure on $B_{\infty }$ 
if and only if $[a_i',a_j']=[a_i',a_j'']=1$ for all $i,j\in I$. (Note that $a_i''$ and $a_j''$ needn't commute for $i\ne j$.) 
\par
\noindent (c) The binary operations $x*_iy=\partial ^p(x^{-1})a_i\partial ^p(y)x$ ($i=1,2$) with $a_1=a_1'\tau _{p,p}a_1''$, $a_2=a_2'\tau _{p,p}^{-1}a_2''$ for some $a_1',a_1'',a_2',a_2''\in B_p$ 
yields a bi-LD-structure on $B_{\infty }$ if and only if $[a_1',a_1'']=[a_2',a_2'']=[a_1',a_2'']=[a_2',a_1'']=[a_1',a_2']=1$.
(Note that $a_1''$ and $a_2''$ needn't commute.)  
\end{prop}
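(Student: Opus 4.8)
The plan is to deduce all three parts from Proposition \ref{multiLD} by specialising the endomorphisms to $f_i=\partial^p$ for every $i\in I$. Then the hypothesis ``$f_i=f_j$ for $i\ne j$'' of Proposition \ref{multiLD} is automatic, and so is the centraliser condition: since $a_i\in B_{2p}=\langle\sigma_1,\dots,\sigma_{2p-1}\rangle$ while for every $x\in B_\infty$ the braid $\partial^{2p}(x)$ is a word in $\sigma_{2p+1},\sigma_{2p+2},\dots$, whose indices differ by at least $2$ from all those occurring in $a_i$, we get $[a_i,\partial^{2p}(x)]=1$. Hence, applying Proposition \ref{multiLD} with $|I|=1$, with a general index set, and with $|I|=2$ respectively, each of (a)--(c) becomes equivalent to the single system of braid relations
\[ a_i\,\partial^p(a_i)\,a_j \;=\; \partial^p(a_j)\,a_i\,\partial^p(a_i)\qquad(i,j\in I), \]
which I denote by $(\ast)$. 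The instance $i=j$ of $(\ast)$ is exactly the self-distributivity law for the single operation $*_i$, so (a) is the case $|I|=1$ of (b); and (c) is the case $I=\{1,2\}$ of $(\ast)$.

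It remains to evaluate $(\ast)$ on $a_i=a_i'\tau^{\varepsilon_i}a_i''$, where $a_i',a_i''\in B_p$, $\tau=\tau_{p,p}$, $\bar\tau=\partial^p(\tau)$, and $\varepsilon_i=+1$ for all $i$ in (a) and (b) while $(\varepsilon_1,\varepsilon_2)=(+1,-1)$ in (c). I would use only two ingredients. First, the relation $\tau^{\varepsilon}\bar\tau^{\varepsilon}\tau^{\varepsilon}=\bar\tau^{\varepsilon}\tau^{\varepsilon}\bar\tau^{\varepsilon}$ for $\varepsilon\in\{+1,-1\}$: for $\varepsilon=+1$ this is the relation $a\,\partial^p(a)\,a=\partial^p(a)\,a\,\partial^p(a)$ recalled just above for $a=\tau_{p,p}$, and the case $\varepsilon=-1$ follows from it by taking inverses, since $\partial^p$ is a homomorphism. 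Second, the fact that $B_p=\langle\sigma_1,\dots,\sigma_{p-1}\rangle$ commutes elementwise with $\langle\sigma_{p+1},\dots,\sigma_{3p-1}\rangle$; in particular each of $a_i',a_i''$ commutes with every $\partial^p(a_j')$ and $\partial^p(a_j'')$ and with $\bar\tau^{\pm1}$, but \emph{not} in general with $\tau^{\pm1}\in B_{2p}$. Substituting $a_i=a_i'\tau^{\varepsilon_i}a_i''$ into both sides of $(\ast)$, using the second ingredient to carry every factor $a_i'$ to the far left and every factor $a_i''$ to the far right of each product, and then invoking the first ingredient once on each side to line up the remaining $\tau^{\pm1}$- and $\bar\tau^{\pm1}$-letters, one reduces both sides of $(\ast)$ to a common word up to a bounded product of conjugates of the commutators $[a_i',a_j']$ and $[a_i',a_j'']$ (only $[a_i',a_i'']$ surviving when $i=j$). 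After stripping the conjugating braids, which are built out of $\partial^p$-shifts and $\tau^{\pm1}$, these commutators land in pairwise commuting strand-block subgroups of some $B_N$ with trivial pairwise intersection; hence the resulting word is trivial if and only if each of those commutators is. Reading off which commutators actually occur gives precisely the conditions stated in (a), (b), (c) — and the letters $a_i''$, $a_j''$ with $i\ne j$ are never brought next to one another in this reduction, which is exactly why no relation between them is needed.

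The routine-but-delicate part is the normalisation just sketched: because $\tau^{\pm1}\in B_{2p}$ does \emph{not} commute with the shifted letters, the $a_i'$ and $a_i''$ can only be pushed past the $\partial^p(\,\cdot\,)$- and $\bar\tau^{\pm1}$-factors and never past a $\tau^{\pm1}$, and the relation $\tau^{\varepsilon}\bar\tau^{\varepsilon}\tau^{\varepsilon}=\bar\tau^{\varepsilon}\tau^{\varepsilon}\bar\tau^{\varepsilon}$ has to be applied at exactly the spot where the two surviving $\tau^{\pm1}$'s flank a $\bar\tau^{\pm1}$ in the wrong order. The step I expect to be the main obstacle is the converse direction, i.e.\ that the listed commutator relations are \emph{necessary}. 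For this I would rewrite $(\ast)$ in the equivalent form $\bigl(a_i\,\partial^p(a_i)\bigr)\,a_j\,\bigl(a_i\,\partial^p(a_i)\bigr)^{-1}=\partial^p(a_j)$, which says that $c_i:=a_i\,\partial^p(a_i)$ conjugates $a_j$ to its $p$-fold shift, and then compare this identity letter by letter in $B_\infty$, using injectivity of $\partial$ together with the trivial intersection of the commuting blocks $\langle\sigma_1,\dots,\sigma_{p-1}\rangle$ and $\langle\sigma_{p+1},\dots,\sigma_{2p-1}\rangle$, to isolate each commutator in turn and force it to vanish. This closes the ``only if'' direction of (a), (b) and (c) simultaneously.
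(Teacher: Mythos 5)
Your strategic reduction is exactly the one the paper itself performs in the paragraph preceding Proposition \ref{abc}: setting $f_i=\partial^p$, noting $a_i\in B_{2p}$ so that $[a_i,\partial^{2p}(x)]=1$ automatically, and invoking Proposition \ref{multiLD} to reduce everything to the single family of relations $(\ast)$: $a_i\,\partial^p(a_i)\,a_j=\partial^p(a_j)\,a_i\,\partial^p(a_i)$. That part is correct, and the paper then declares the remaining verification a ``straightforward computation''. The problem is in your sketch of that remaining computation, and it is a genuine gap rather than a stylistic one: you work with only two ingredients (the relation $\tau\bar\tau\tau=\bar\tau\tau\bar\tau$ and the elementwise commutation of $B_p$ with $\langle\sigma_{p+1},\dots,\sigma_{3p-1}\rangle$), and you explicitly assert that the letters $a_i',a_i''$ are ``never pushed past a $\tau^{\pm1}$''. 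Under those constraints the normalisation you describe cannot close up. Already for $i=j$, after substituting $a=a'\tau a''$ the left side of $(\ast)$ contains, trapped between its two $\tau$-letters, a subword of $B_p$ (essentially $a''a'$) for which the right side has no counterpart; commuting with shifted letters and applying $\tau\bar\tau\tau=\bar\tau\tau\bar\tau$ can never remove it, so ``carrying every $a_i'$ to the far left and every $a_i''$ to the far right'' is simply not possible. (A two-strand-block example with $p=2$, $a'=\sigma_1$, $a''=1$ already exhibits the obstruction.)

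The missing third ingredient is the intertwining relation $\tau_{p,p}\,y=\partial^p(y)\,\tau_{p,p}$ (equivalently $\tau_{p,p}\,\partial^p(y)=y\,\tau_{p,p}$) for $y\in B_p$ --- i.e.\ conjugation by the block-crossing braid $\tau_{p,p}$ exchanges $B_p$ with $\partial^p(B_p)$. This is precisely the relation the paper uses later in the proof of Proposition \ref{SDP} (``$\tau_{p,p}\beta_2=\partial^p(\beta_2)\tau_{p,p}$''), and it is what lets you move the primed and double-primed letters through the $\tau^{\pm1}$'s at the cost of a $p$-fold shift, after which both sides of $(\ast)$ really do collapse to a common word times a product of independently located commutators $[a_i',a_j']$, $[a_i',a_j'']$ (and, in case (c), the signs $\varepsilon_i$ determine which of these survive). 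Once this relation is in place, your plan for necessity --- isolating each commutator via strand-deletion retractions and the trivial intersection of the commuting block subgroups --- is workable; without it, neither direction of the equivalence can be carried out as you have written it. As a secondary point, your stated reason that $a_i''$ and $a_j''$ never need to commute (``they are never brought next to one another because nothing crosses a $\tau$'') is therefore also not the true mechanism; it is the asymmetric way the intertwining relation shifts the primed versus double-primed letters that keeps $[a_i'',a_j'']$ out of the final obstruction.
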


We see that there exist infinitely many (multi)-LD-structures on $B_{\mathbb{N}}$. Further examples are provided by Proposition \ref{split}, which, of course, admits a lot of variations and generalizations.

\begin{prop} \label{split} Let be $p,p_1,p_2\in \mathbb{N}$ with $p_1+p_2=p$.  The binary operation $x*_ay=\partial ^p(x^{-1})a\partial ^p(y)x$ with 
\[ a=a_1'\partial ^{p_1}(a_2')\partial ^{p_1}(\tau _{p_2,p})\tau _{p,p_1}^{-1}a_1''\partial ^{p_1}(a_2'')\] for some $a_1',a_1''\in B_{p_1}$, $a_2',a_2''\in B_{p_2}$ yields an LD-structure 
on $B_{\infty }$ if and only if $[a_1',a_1'']=[a_2',a_2'']=1$. 
\end{prop}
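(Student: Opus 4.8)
The plan is to derive Proposition \ref{split} as an instance of the general criterion in Proposition \ref{multiLD} with $|I|=1$, reusing the structural facts about $\tau_{p,q}$ already available. Write $s=\partial^p$, $f=\partial^p$ (so $f$ is the shift endomorphism applied $p$ times), and set
\[ a=a_1'\,\partial^{p_1}(a_2')\,\partial^{p_1}(\tau_{p_2,p})\,\tau_{p,p_1}^{-1}\,a_1''\,\partial^{p_1}(a_2''). \]
First I would record that $a\in B_{2p}$: indeed $a_1',a_1''\in B_{p_1}\le B_p$, $\partial^{p_1}(a_2'),\partial^{p_1}(a_2'')$ involve only strands $p_1+1,\dots,p_1+p_2=p$, and by the definition of $\tau$ one checks $\partial^{p_1}(\tau_{p_2,p})$ and $\tau_{p,p_1}$ lie in $B_{2p}$ (the latter is $\delta_{p+1}\partial(\delta_{p+1})\cdots\partial^{p_1-1}(\delta_{p+1})$, touching strands $1,\dots,p+p_1\le 2p$; the former, shifted by $p_1$, touches strands $p_1+1,\dots,p_1+p_2+p-1<2p$). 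Hence $[a,\partial^{2p}(x)]=1$ automatically, so by Proposition \ref{multiLD} the only condition to analyze is the "braid-like" relation
\[ a\,\partial^p(a)\,a=\partial^p(a)\,a\,\partial^p(a). \tag{$\ast$} \]

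The heart of the argument is to massage the left-hand and right-hand sides of $(\ast)$ into a form where the known identity $\tau_{p,p}\,\partial^p(\tau_{p,p})\,\tau_{p,p}=\partial^p(\tau_{p,p})\,\tau_{p,p}\,\partial^p(\tau_{p,p})$ (stated just before Proposition \ref{abc}) can be applied. The guiding idea is that the displayed element $a$ is, up to the decorations $a_1',a_1'',a_2',a_2''$, a "factored" version of $\tau_{p,p}$: one should verify the key factorization identity
\[ \partial^{p_1}(\tau_{p_2,p})\,\tau_{p,p_1}^{-1}=\tau_{p,p}\cdot(\text{something central enough}), \]
or more precisely that $\partial^{p_1}(\tau_{p_2,p})\cdot\tau_{p,p_1}$-type rearrangements reproduce $\tau_{p,p}$; this is exactly the sort of splitting identity for $\tau_{p,q}$ that appears in Dehornoy's monograph (Chapter I.4 of \cite{De00}) and is the non-routine input. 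Granting such a normal form, the $a_i'$-parts collect on the left and commute past the $\tau$-block because $a_1'\in B_{p_1}$ and $\partial^{p_1}(a_2')$ acts on the complementary strands $p_1+1,\dots,p$ — note $B_{p_1}$ and $\partial^{p_1}(B_{p_2})$ commute elementwise — while the $a_i''$-parts collect on the right similarly. One then tracks how the three copies $a,\partial^p(a),\partial^p(a)$ (resp. $\partial^p(a),a,\partial^p(a)$) interleave: the $''$-tail of one copy meets the $'$-head of the next, and since $\partial^p$ shifts strand indices by $p$, the tail $a_1''\partial^{p_1}(a_2'')$ of the first factor (on strands $1,\dots,p$) commutes with everything in $\partial^p(a)$ except where it must be compared with $\partial^p(a_1')\partial^{p+p_1}(a_2')$. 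After cancelling the common $\tau_{p,p}$-braid relation, $(\ast)$ reduces to the single commutator requirement that the $'$-part and the $''$-part of $a$ on the overlapping strand ranges commute, i.e. $[a_1',a_1'']=1$ and $[a_2',a_2'']=1$; the cross terms $[a_1',a_2'']$ etc. vanish for free since $B_{p_1}$ and $\partial^{p_1}(B_{p_2})$ are disjoint-support. This gives both directions: the relations are clearly sufficient by the above, and necessary because specializing, e.g., $a_2'=a_2''=1$ forces $[a_1',a_1'']=1$ from the resulting instance of $(\ast)$ (and symmetrically for the other).

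The main obstacle I expect is establishing the correct normal form / splitting identity for the $\tau$-factors — i.e., proving that $a_1'\partial^{p_1}(a_2')\,\partial^{p_1}(\tau_{p_2,p})\,\tau_{p,p_1}^{-1}\,a_1''\partial^{p_1}(a_2'')$ really does satisfy the Dehornoy braid relation $(\ast)$ up to the stated commutators, which amounts to an explicit (if tedious) manipulation of the words $\delta_n=\sigma_{n-1}\cdots\sigma_1$ and their shifts, using the defining relations $\partial^{p_1}(\tau_{p_2,p})\,\tau_{p,p_1}=\tau_{p,p}$-style identities. Concretely one should first prove $(\ast)$ in the "pure" case $a_1'=a_1''=a_2'=a_2''=1$ (this is essentially the content that $\tau_{p,p}^{\pm1}\in B_{2p}$ solves the braid relation, already asserted in the text, once the splitting of $\tau_{p,p}$ into $\partial^{p_1}(\tau_{p_2,p})\tau_{p,p_1}^{-1}$-blocks is verified), and only then introduce the decorations and check that they contribute exactly the four (really two nontrivial) commutators. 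Everything else — membership in $B_{2p}$, the automatic centrality condition, the reduction via Proposition \ref{multiLD}, and the necessity direction — is routine bookkeeping about strand supports.
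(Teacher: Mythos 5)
Your overall frame is the right one, and it is essentially the route the paper intends: the paper gives no worked computation for Proposition \ref{split} (it only remarks that the proofs of Propositions \ref{abc} and \ref{split} are straightforward computations, best done with pictures), and reducing via Proposition \ref{multiLD} with $|I|=1$ to the two conditions $[a,\partial ^{2p}(x)]=1$ (automatic once $a\in B_{2p}$, which you verify correctly) and the relation $(\ast)$: $a\,\partial ^p(a)\,a=\partial ^p(a)\,a\,\partial ^p(a)$ is exactly how the proof has to start. The problem is that everything after that reduction --- which is the entire content of the proposition --- is left as a plan, and the one concrete identity you propose to hang the computation on is false as stated. You posit $\partial ^{p_1}(\tau _{p_2,p})\,\tau _{p,p_1}^{-1}=\tau _{p,p}\cdot(\text{something central enough})$ so that the known braid relation for $\tau _{p,p}$ can be "cancelled". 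Already for $p_1=p_2=1$, $p=2$ one gets $\partial (\tau _{1,2})\tau _{2,1}^{-1}=\sigma _2\sigma _3\sigma _1^{-1}\sigma _2^{-1}$ and $\tau _{2,2}^{-1}\cdot \sigma _2\sigma _3\sigma _1^{-1}\sigma _2^{-1}=\sigma _2^{-1}\sigma _1^{-2}\sigma _2^{-1}=\sigma _1^{2}\Delta _3^{-2}$, which does not commute with $\sigma _3=\partial ^2(\sigma _1)$, so the correction factor genuinely interacts with the shifted copies of $a$ and cannot be moved out of the way. The actual verification must confront the mixed signs (one positive and one inverse block crossing) directly, using $\tau _{p,q}x=\partial ^q(x)\tau _{p,q}$ for $x\in B_p$, $\tau _{p,q}\partial ^p(y)=y\tau _{p,q}$ for $y\in B_q$, and $\tau _{p,q+r}=\tau _{p,q}\partial ^q(\tau _{p,r})$ to bring both sides of $(\ast)$ to a common normal form; a helpful guide is that in the pure case the element is a product of two commuting band generators, e.g.\ $\sigma _2\sigma _3\sigma _1^{-1}\sigma _2^{-1}=(\sigma _2\sigma _3\sigma _2^{-1})(\sigma _2\sigma _1\sigma _2^{-1})^{-1}$. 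None of this is carried out, so the "if" direction is not established.

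The "only if" direction is also argued fallaciously. Necessity is a statement about a fixed tuple $(a_1',a_1'',a_2',a_2'')$ for which $(\ast)$ happens to hold; you are not entitled to "specialize $a_2'=a_2''=1$" and still assume $(\ast)$ for the modified element. What you need is that your normal-form reduction turns $(\ast)$ into an \emph{equivalent} identity which then separates, by the disjoint supports of $B_{p_1}$ and $\partial ^{p_1}(B_{p_2})$ (e.g.\ via the retractions of $B_{2p}$ killing the complementary generators), into the two independent conditions $[a_1',a_1'']=1$ and $[a_2',a_2'']=1$; you gesture at this separation for sufficiency but then abandon it for necessity. In short: correct strategy, correct bookkeeping of supports and of the automatic centrality condition, but the decisive braid computation is missing, its proposed key lemma is incorrect in the form given, and the necessity argument must be replaced by the support-separation argument applied to the reduced form of $(\ast)$.
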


The proofs of Proposition \ref{abc} and \ref{split} are straightforward computations. The reader is recommended to draw some pictures.

\subsection{Yet another group-based LD-system}
Though we are sure that it must have been well known to experts, we haven't been able to find the following natural LD-operation for groups in the literature.
For a group $G$, $(G,\circ )$ is an LD-system with 
\[ x\circ y=xy^{-1}x. \]
Note that, contrary to the conjugacy operation $*$, for this {\it "symmetric decomposition" or conjugacy operation} $\circ $, 
the corresponding relation $\stackrel{}{\longrightarrow }_{\circ }$, defined by $x\stackrel{}{\longrightarrow }_{\circ }y$ if and only if there exists a $c\in G$ such that $y=c\circ x$, is not an equivalence relation. In particular, $\stackrel{}{\longrightarrow }_{\circ }$ is reflexive and symmetric, 
but not transitive. 
\par 
One may consider several generalizations of this symmetric conjugacy operation $\circ $, as candidates for natural LD-operations in groups.
Consider an ansatz like $x\circ y=f(x)g(y^{-1})h(x)$ for some group endomorphisms $f,g,h$.

\begin{prop} \label{PropLDsymmConj}
Let $G$ be a group, and $f,g,h \in End(G)$. Then the binary operation
$x\circ y=f(x) \cdot g(y^{-1})\cdot h(x)$ yields an LD-structure on $G$ if and only if 
\begin{equation} \label{fsymmConjLDeqs} f^2=f, \quad fh=gh=fg, \quad hg=gf=hf, \quad h^2=h. \end{equation}
\end{prop}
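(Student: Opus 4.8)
The plan is to follow the pattern of the proof of Proposition~\ref{PropLDConj}: write out the left self-distributivity law $\alpha\circ(\beta\circ\gamma)=(\alpha\circ\beta)\circ(\alpha\circ\gamma)$, expand both sides into words in the $f,g,h$-images of $\alpha,\beta,\gamma$, and then read off the conditions by comparing the two words.

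First I would compute the left-hand side. Since $\beta\circ\gamma=f(\beta)g(\gamma^{-1})h(\beta)$ and $g$ is a homomorphism, one has $g\bigl((\beta\circ\gamma)^{-1}\bigr)=gh(\beta^{-1})\,g^{2}(\gamma)\,gf(\beta^{-1})$, whence
\[
\alpha\circ(\beta\circ\gamma)=f(\alpha)\,gh(\beta^{-1})\,g^{2}(\gamma)\,gf(\beta^{-1})\,h(\alpha).
\]
Then I would expand the right-hand side by applying $f$, $g$ and $h$ to $u=\alpha\circ\beta$ and $w=\alpha\circ\gamma$ and substituting; this gives the nine-letter word
\[
(\alpha\circ\beta)\circ(\alpha\circ\gamma)=f^{2}(\alpha)\,fg(\beta^{-1})\,fh(\alpha)\,gh(\alpha^{-1})\,g^{2}(\gamma)\,gf(\alpha^{-1})\,hf(\alpha)\,hg(\beta^{-1})\,h^{2}(\alpha).
\]

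Next I would compare. Both words carry the same central factor $g^{2}(\gamma)$, and $\gamma$ occurs nowhere else, so for the identity to hold for all $\alpha,\beta,\gamma$ the parts strictly to the left of $g^{2}(\gamma)$ should agree, and likewise the parts strictly to the right. Matching the left parts forces $f^{2}=f$ (the outermost $\alpha$-letter), $fg=gh$ (identifying $gh(\beta^{-1})$ with $fg(\beta^{-1})$), and the surplus pair $fh(\alpha)\,gh(\alpha^{-1})$ on the right to be trivial, i.e. $fh=gh$; matching the right parts symmetrically yields $h^{2}=h$, $hg=gf$, and $gf=hf$. Together these say exactly $f^{2}=f$, $fh=gh=fg$, $hg=gf=hf$, $h^{2}=h$. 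Conversely, feeding these four relations back in makes the pairs $fh(\alpha)\,gh(\alpha^{-1})$ and $gf(\alpha^{-1})\,hf(\alpha)$ collapse to the identity and turns $f^{2}(\alpha),h^{2}(\alpha)$ into $f(\alpha),h(\alpha)$, so the right-hand side reduces to $f(\alpha)\,gh(\beta^{-1})\,g^{2}(\gamma)\,gf(\beta^{-1})\,h(\alpha)$, which is the left-hand side.

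The one genuinely delicate point is the necessity direction: the term-by-term comparison is only legitimate once one knows the relevant images behave freely enough that the surplus letters cannot cancel by coincidence. As in Proposition~\ref{PropLDConj}, this is implicitly handled by testing the word identity on a free group mapping onto $G$ (so that $\alpha,\beta,\gamma$ play the role of independent variables), after which the four relations are genuinely forced. Beyond that, the only real work is the bookkeeping in expanding the nested right-hand side, which is where I would be most careful about sign and order errors.
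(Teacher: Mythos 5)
Your proposal is correct and follows essentially the same route as the paper: expand $\alpha\circ(\beta\circ\gamma)$ and $(\alpha\circ\beta)\circ(\alpha\circ\gamma)$ into the same two words the paper obtains and read off the relations by comparison, with the converse checked by cancelling the surplus pairs. Your closing caveat about when the term-by-term comparison is actually forced is a fair point that the paper itself glosses over, but it does not change the argument.
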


\begin{proof}
A straightforward computation yields
\begin{eqnarray*}
\alpha \circ (\beta \circ \gamma )&=&f(\alpha ) gh(\beta ^{-1}) g^2(\gamma ) gf(\beta ^{-1}) h(\alpha ), \quad {\rm and} \\
(\alpha \circ \beta)\circ (\alpha \circ \gamma )&=& f^2(\alpha ) fg(\beta ^{-1}) fh(\alpha ) gh(\alpha ^{-1}) g^2(\gamma ) gf(\alpha ^{-1}) \cdot \\
&& hf(\alpha ) hg(\beta ^{-1}) h^2(\alpha ).
\end{eqnarray*}
A comparison of both terms yields the assertion. 
\end{proof}

Except for $f^2=f=g=h=h^2$, the simplest solutions of the system of equations (\ref{fsymmConjLDeqs}) are $f^2=f=g$ and $h={\rm id}$, or
$f={\rm id}$ and $g=h=h^2$.  
\begin{coro} ({\sc LD- or $f$-symmetric conjugacy}) \label{fsymC}
Let $G$ be a group, and $f\in End(G)$ an endomorphism that is also a projector ($f^2=f$).
Then $(G, \circ _f)$ and $(G, \circ _f^{\rm rev})$, defined by $x\circ _f y=f(xy^{-1})x$ and $x \circ _f^{\rm rev} y=xf(y^{-1}x)$, are LD-systems.
\end{coro}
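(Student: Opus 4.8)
The plan is to obtain both claims as direct specializations of Proposition~\ref{PropLDsymmConj}, using the two simplest families of solutions of the system~(\ref{fsymmConjLDeqs}) singled out just before the corollary. First I would record that, because $f$ is a group homomorphism, each of the two operations can be put in the ansatz form $x\circ y=F(x)\,G(y^{-1})\,H(x)$ treated in Proposition~\ref{PropLDsymmConj}. Indeed $x\circ_f y=f(xy^{-1})x=f(x)\,f(y^{-1})\,x$, which is of that form with $F=G=f$ and $H={\rm id}$; and $x\circ_f^{\rm rev}y=x\,f(y^{-1}x)=x\,f(y^{-1})\,f(x)$, which is of that form with $F={\rm id}$ and $G=H=f$.

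Next I would check that each of these two triples $(F,G,H)$ satisfies the system~(\ref{fsymmConjLDeqs}), namely $F^2=F$, $FH=GH=FG$, $HG=GF=HF$, and $H^2=H$; this is the only thing that has to be verified, and it is exactly here that the projector hypothesis $f^2=f$ is used. For $(F,G,H)=(f,f,{\rm id})$ the first equation is $f^2=f$, the last is ${\rm id}={\rm id}$, and each of the two middle chains collapses to $f=f=f^2$, so all four hold. For $(F,G,H)=({\rm id},f,f)$ the first equation is ${\rm id}={\rm id}$, the last is $f^2=f$, and each of the two middle chains collapses to $f=f^2=f$, so again all four hold. Hence Proposition~\ref{PropLDsymmConj} applies in both cases and yields the left self-distributivity of $\circ_f$ and of $\circ_f^{\rm rev}$.

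There is no genuine obstacle here; the computation is pure bookkeeping. The only points deserving a little care are the name clash between the fixed endomorphism $f$ of the corollary and the symbol $f$ appearing in the ansatz of Proposition~\ref{PropLDsymmConj} (so one must keep straight which endomorphism plays which role in each case), and the observation that $f^2=f$ is precisely what makes the relevant instance of~(\ref{fsymmConjLDeqs}) valid --- in the direct case through the equation $F^2=F$, in the reverse case through $H^2=H$. Equivalently, one may simply invoke the two solution families of~(\ref{fsymmConjLDeqs}) listed right after Proposition~\ref{PropLDsymmConj} and identify the resulting operations with $\circ_f$ and $\circ_f^{\rm rev}$.
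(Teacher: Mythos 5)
Your proposal is correct and follows exactly the route the paper intends: the corollary is stated immediately after the remark identifying $(f,g,h)=(f,f,\mathrm{id})$ and $(f,g,h)=(\mathrm{id},f,f)$ with $f^2=f$ as the simplest solutions of the system~(\ref{fsymmConjLDeqs}), and your verification that these two triples satisfy all four equations precisely under the projector hypothesis, together with the rewriting $f(xy^{-1})x=f(x)f(y^{-1})x$ and $xf(y^{-1}x)=xf(y^{-1})f(x)$, is exactly the intended specialization of Proposition~\ref{PropLDsymmConj}.
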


\begin{prop}  \label{distrOversymmConj}
Let $G$ be a group, and $f, g \in End(G)$. \\
(i) Then the binary operations $\circ _f$ and $*_f$ (and $*^{\rm rev}_f$), defined by 
$x\circ _f y=f(x) \cdot g(y^{-1})\cdot h(x)$ and $x*_f y=f(x^{-1} \cdot y)\cdot h(x)$ ($x* ^{\rm rev}_f y=x \cdot f(y \cdot x^{-1})$),
are distributive over $\circ $. In particular $*$ ($*^{\rm rev}$) is distributive over $\circ $. In short, the following equations hold.
\[   x*_f(y\circ z)=(x*_fy)\circ (x*_fz), \quad x\circ _f(y\circ z)=(x\circ _fy)\circ (\circ _fz) \forall x,y,z\in G.\]
(ii) The operations $\circ _f$ and $*_f$ ($*^{\rm rev}_f$) are distributive over $\circ _g$ if and only if $f=gf=fg$.
\end{prop}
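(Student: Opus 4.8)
The plan is to proceed exactly as in the proofs of Propositions~\ref{PropLDConj} and~\ref{PropLDsymmConj}: expand both sides of each distributivity law as words in $f,g,h$ applied to $\alpha,\beta,\gamma$ (where for part~(i) we have $g=f$, $h=\mathrm{id}$ and for $\circ$ itself we have $f=g=h=\mathrm{id}$), and compare letter by letter. Concretely, for part~(i) I would first write out $x\circ(y\circ z)=x\,(y z^{-1}y)^{-1}\,x=x y^{-1} z y^{-1} x$ so that the target pattern is fixed, and then compute $x*_f(y\circ z)=f\bigl(x^{-1}(y z^{-1} y)\bigr)h(x)$ against $(x*_f y)\circ(x*_f z)=\bigl(f(x^{-1}y)h(x)\bigr)\bigl(f(x^{-1}z)h(x)\bigr)^{-1}\bigl(f(x^{-1}y)h(x)\bigr)$. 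Since $h=\mathrm{id}$ here and $f$ is a homomorphism, the right-hand side telescopes: $f(x^{-1}y)\,x\,x^{-1}f(z^{-1}x)\,f(x^{-1}y)\,x = f(x^{-1}y)\,f(z^{-1}x)\,f(x^{-1}y)\,x = f(x^{-1}y z^{-1} y)\,x$, which is exactly $x*_f(y\circ z)$. The same telescoping argument, with the roles of the inner and outer multiplications swapped, handles $*^{\mathrm{rev}}_f$ and $\circ_f$ distributing over $\circ$; for $\circ_f$ one uses that $\circ_f$ with $f$ a projector is covered by Corollary~\ref{fsymC}, but the distributivity over $\circ$ does not even need idempotency of $f$, only that $f$ is an endomorphism, so I would state it at that level of generality.

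For part~(ii) the structure is the same but now the inner operation is $\circ_g$ with a genuine endomorphism $g$, so the cancellations are only partial and one is forced to compare coefficients. I would expand $x*_f(y\circ_g z)=f\bigl(x^{-1}\,g(y z^{-1} y)\,\text{(corrected: }g(y)g(z^{-1})g(y)\text{)}\bigr)h(x)$ — here I must be careful to use $y\circ_g z = g(y)g(z)^{-1}g(y)$ per the definition in Proposition~\ref{PropLDsymmConj} with $h=\mathrm{id}$, i.e. $x\circ_g y=g(x)g(y^{-1})g(x)$ — versus $(x*_f y)\circ_g(x*_f z)=g\bigl(f(x^{-1}y)h(x)\bigr)\,g\bigl(f(x^{-1}z)h(x)\bigr)^{-1}\,g\bigl(f(x^{-1}y)h(x)\bigr)$, and since $g$ is a homomorphism the right side becomes $gf(x^{-1}y)\,gh(x)\,gh(x^{-1})\,gf(z^{-1}x)\,gf(x^{-1}y)\,gh(x)=gf(x^{-1}y)\,gf(z^{-1}x)\,gf(x^{-1}y)\,gh(x)$. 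The left side is $fg(x^{-1})\,fg(y)\,fg(z^{-1})\,fg(y)\,h(x)$ (with $h=\mathrm{id}$, $=fg(x^{-1})fg(y z^{-1} y)$). Matching the first block forces $fg=gf$ and $fg=gf$ applied consistently, matching the trailing $h(x)$-block against $gh(x)=g(x)$ forces $f g(x^{-1})$ to absorb correctly, and tracking the middle occurrence of $x$ that appears on the right but not on the left forces $gf=\mathrm{id}$ on that slot — the precise bookkeeping yields the stated condition $f=gf=fg$. I would do the $\circ_f$-over-$\circ_g$ case in parallel, where the analogous comparison again isolates $f=gf=fg$, and note that sufficiency is immediate by substituting these relations back and observing the telescoping is restored.

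The main obstacle is purely organizational rather than conceptual: in part~(ii) the two sides have different numbers of occurrences of the variable $x$ (three on the distributed side versus two on the folded side, before simplification), so the "comparison of both terms" is not a naive letter-by-letter match but requires first reducing each side to a normal form using that $f,g,h$ are homomorphisms, and only then reading off which compositions must coincide. I would therefore present the computation by writing each side as a product of the form $(\text{word in }f,g\text{ applied to }x^{\pm1})\cdot(\text{word in }f,g\text{ applied to }y)\cdot(\text{word applied to }z^{-1})\cdot(\text{word applied to }y)\cdot(\text{word applied to }x)$ and comparing the five coefficient-words; this makes the equivalence with $f=gf=fg$ transparent. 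As in the rest of this subsection, once the normal forms are written down the verification is a short check, so I would relegate it to "a straightforward computation" with the explicit expansions displayed exactly as in the proofs of Propositions~\ref{PropLDConj} and~\ref{PropLDsymmConj}, and I would add the parenthetical reminder that drawing the braid-style pictures (as recommended after Proposition~\ref{split}) makes the telescoping in part~(i) visually obvious.
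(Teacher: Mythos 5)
The paper itself gives no proof of Proposition \ref{distrOversymmConj}: as with Propositions \ref{abc} and \ref{split}, it is left as a straightforward expand-and-compare computation, so your overall strategy is exactly the intended one. Your part (i) is correct and complete in substance --- the telescoping $f(x^{-1}y)\,x\cdot x^{-1}f(z^{-1}x)\cdot f(x^{-1}y)\,x=f(x^{-1}yz^{-1}y)\,x$ is the whole argument, and you are right that no idempotency of $f$ is needed for distributivity over the plain operation $\circ$.

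Part (ii), however, contains a genuine error that derails the computation. You take $x\circ_g y=g(x)g(y^{-1})g(x)$, but the notation $\circ_g$ is fixed by Corollary \ref{fsymC}: $x\circ_g y=g(xy^{-1})\,x$, i.e.\ the ansatz of Proposition \ref{PropLDsymmConj} with $h=\mathrm{id}$ (your own parenthetical ``with $h=\mathrm{id}$'' contradicts the formula you then write down, which has $h=g$). This is not cosmetic: with your reading the two normal forms are $f(x^{-1})\,fg(y)\,fg(z^{-1})\,fg(y)\,x$ versus $gf(x^{-1})\,gf(y)\,gf(z^{-1})\,gf(y)\,g(x)$, and the trailing slots $x$ versus $g(x)$ force $g=\mathrm{id}$, not the stated condition. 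With the correct reading one gets $f(x^{-1})\,fg(y)\,fg(z^{-1})\,f(y)\,x$ versus $gf(x^{-1})\,gf(y)\,gf(z^{-1})\,gf(x)f(x^{-1})\,f(y)\,x$; the $y$- and $z$-slots give $fg=gf$ and the surplus factor $gf(x)f(x^{-1})$ gives $gf=f$, which is exactly $f=gf=fg$. Two smaller slips in the same passage: the leading letter of your left-hand side should be $f(x^{-1})$, not $fg(x^{-1})$, and the middle slot forces $gf=f$, not $gf=\mathrm{id}$. Finally, a caveat you inherit from the paper's own convention in Propositions \ref{PropLDConj} and \ref{PropLDsymmConj}: letter-by-letter comparison proves sufficiency, but necessity of $f=gf=fg$ only holds when the identity is required to hold as an identity of words; on an abelian group, for instance, $fg=gf$ alone already gives distributivity.
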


\section{Key establishment for all LD-systems}
\subsection{The protocol}

Recall that a \emph{magma} is a set $M$ equipped with a binary operation, say $\bullet $, which is possibly non-associative. 
For our purposes all interesting LD-systems are non-associative. 
Consider an element $y$ of a magma $(M,\bullet )$ which is an iterated product of other elements in $M$. Such an element can be described by a planar rooted binary
tree $T$ whose $k$ leaves are labelled by these other elements $y_1,\ldots ,y_k\in M$. We use the notation $y=T_{\bullet }(y_1,\ldots ,y_k)$. 
Here the subscript $\bullet $ tells us that the grafting of subtrees of $T$ corresponds to the operation $\bullet $. 
\par 
Consider, for example, the element $y=((b \bullet c) \bullet (a \bullet b)) \bullet b $. The corresponding
labelled planar rooted binary tree $T$ is displayed in the following figure.

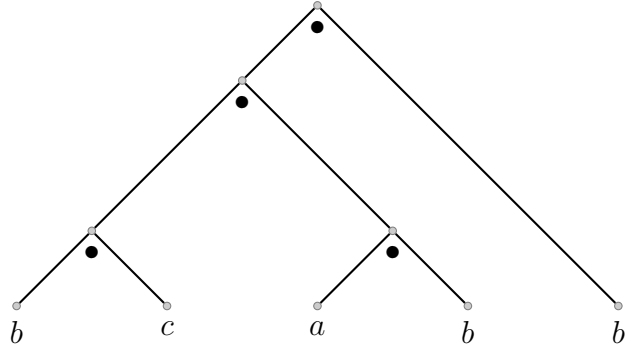
\begin{figure}[ht]
  \caption{ The element $y=((b \bullet c) \bullet (a \bullet b)) \bullet b =T_{\bullet }(b,c,a,b,b)$}
\begin{center}
\begin{tikzpicture}
   \node (r1) at (0,0) [circle,inner sep=1pt,draw=black!50,fill=black!20]{}; \node[below] at (r1.south) {$b$};  
   \node (r2) at (2,0) [circle,inner sep=1pt,draw=black!50,fill=black!20]{}; \node[below] at (r2.south) {$c$};  
   \node (r3) at (4,0) [circle,inner sep=1pt,draw=black!50,fill=black!20]{}; \node[below] at (r3.south) {$a$};  
   \node (r4) at (6,0) [circle,inner sep=1pt,draw=black!50,fill=black!20]{}; \node[below] at (r4.south) {$b$};  
   \node (r5) at (8,0) [circle,inner sep=1pt,draw=black!50,fill=black!20]{}; \node[below] at (r5.south) {$b$};  
   \node (i12) at (1,1) [circle,inner sep=1pt,draw=black!50,fill=black!20]{}
          edge[thick] (r1)   edge[thick] (r2);     \node[below] at (i12.south) {$\bullet $};  
   \node (i34) at (5,1) [circle,inner sep=1pt,draw=black!50,fill=black!20]{}
          edge[thick] (r3)   edge[thick] (r4);     \node[below] at (i34.south) {$\bullet $};  
   \node (i14) at (3,3) [circle,inner sep=1pt,draw=black!50,fill=black!20]{}
          edge[thick] (i12)   edge[thick] (i34);     \node[below] at (i14.south) {$\bullet $};  
   \node (i15) at (4,4) [circle,inner sep=1pt,draw=black!50,fill=black!20]{}
          edge[thick] (i14)   edge[thick] (r5);   \node[below] at (i15.south) {$\bullet $}; 
\end{tikzpicture}
\end{center}
\end{figure}

It is easy to prove by induction (over the depth of the involved trees) that any magma homomorphism $\beta :(M,\bullet )\rightarrow (N,\circ )$ satisfies
\[ \beta (T_{\bullet }(y_1,\ldots ,y_k))=T_{\circ }(\beta (y_1),\ldots ,\beta (y_k)) \]
for all $y_1,\ldots ,y_k\in M$. 

\begin{prop} \label{LDendo}
Let $(L,*)$ be an LD-system.
Then, for any element $x\in L$, the left multiplication map $\phi _x: y \mapsto x*y$ defines a magma endomorphism of $L$.
\end{prop}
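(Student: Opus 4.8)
The plan is to show directly that $\phi_x$ respects the operation $*$, i.e. that $\phi_x(y*z)=\phi_x(y)*\phi_x(z)$ for all $y,z\in L$. First I would simply unwind both sides using the definition of $\phi_x$. The left-hand side is $\phi_x(y*z)=x*(y*z)$, and the right-hand side is $\phi_x(y)*\phi_x(z)=(x*y)*(x*z)$. The equality of these two expressions is then \emph{precisely} the left self-distributivity law from Definition~2.1, applied to the triple $(x,y,z)$. Hence $\phi_x$ is a magma homomorphism $(L,*)\to(L,*)$, and since its source and target coincide it is an endomorphism, which is the assertion.

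Concretely, I would write: fix $x\in L$ and take arbitrary $y,z\in L$; then
\[
\phi_x(y*z)=x*(y*z)=(x*y)*(x*z)=\phi_x(y)*\phi_x(z),
\]
where the middle equality is the LD-law. Since $y,z$ were arbitrary, $\phi_x$ is a magma endomorphism of $(L,*)$.

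There is essentially no obstacle here: the statement is an immediate reinterpretation of the defining axiom, and the ``main step'' is just recognizing that left self-distributivity says exactly that every left translation is an endomorphism. If one wishes to emphasize the connection to the preceding tree discussion, one could additionally remark that, by the induction principle stated just before the proposition, it follows that $\phi_x(T_*(y_1,\dots,y_k))=T_*(\phi_x(y_1),\dots,\phi_x(y_k))=T_*(x*y_1,\dots,x*y_k)$ for every labelled planar rooted binary tree $T$ and all $y_1,\dots,y_k\in L$; but this is a consequence of the proposition rather than part of its proof.
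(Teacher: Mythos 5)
Your proof is correct and coincides with the paper's own argument: both simply unwind $\phi_x(y*z)=x*(y*z)$ and $(x*y)*(x*z)=\phi_x(y)*\phi_x(z)$ and observe that their equality is exactly the left self-distributivity law. Nothing further is needed.
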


\begin{proof} 
$\phi _x(y_1*y_2)=x*(y_1*y_2)\stackrel{LD}{=}(x*y_1)*(x*y_2)=\phi _x(y_1) * \phi _x(y_2)$. 
\end{proof}

We are going to describe a KEP that applies to any LD-system $(L,*)$.
There are two public submagmas $S_A=\langle s_1, \cdots , s_m \rangle _*$, $S_B=\langle t_1, \cdots , t_n \rangle _*$ of $(L,*)$, assigned to Alice and Bob.
Alice and Bob perform the following protocol steps.

\begin{description}
\item[{\bf Protocol 1}] {\sc Key establishment for any LD-system} $(L,*)$.
\item[{\rm 1}] Alice generates her secret key $(a_0,a) \in S_A \times L$, and Bob chooses his secret key $b\in S_B$.
\item[{\rm 2}] Alice computes the elements $a*t_1, \ldots ,a*t_n, p_0=a*a_0 \in L$, and sends them to Bob. 
Bob computes $b*s_1, \ldots ,b*s_m\in L$, and sends them to Alice. 
\item[{\rm 3}] Alice, knowing $a_0=T_*(r_1, \ldots , r_k)$ with $r_i\in \{s_1,\ldots ,s_m\}$, computes from the received message
\[ T_*(b*r_1, \ldots , b*r_k)=b*T_*(r_1, \ldots , r_k)=b*a_0. \]
And Bob, knowing $b=T'_*(u_1, \ldots , u_{k'})$ with $u_j\in \{t_1,\ldots ,t_n\}$, computes from his received message
\[ T'_*(a*u_1, \ldots , a*u_{k'})=a*T'_*(u_1, \ldots , u_{k'})=a*b. \]
\item[{\rm 4}] Alice computes $K_A=a*(b*a_0)$.
Bob gets the shared key by 
\[ K_B:=(a*b)*p_0=(a*b)*(a*a_0)\stackrel{(LD)}{=}K_A. \]
\end{description}

This protocol is an asymmetric modification of the Anshel-Anshel-Goldfeld protocols for magmas introduced in \cite{Ka07, Ka12}. 

\begin{figure}[ht]
  \caption{\sc Protocol 1: Key establishment for any LD-system}
\begin{center} 
 \begin{tikzpicture}
 \node[red]  (Alice) at ( 0,0) [circle,draw=black!50,fill=black!20]{Alice};
 \node[blue]  (Bob) at ( 7,0) [circle,draw=black!50,fill=black!20]{Bob}
   edge [<-, bend right=10] node[auto,swap] (pA) {$\{ {\color{red}a }* {\color{green}t_i} \}_{1\le i \le n}, \, {\color{red}a*a_0} $} (Alice)   
   edge [->, bend left=10] node[auto] (pB) { $\{ {\color{red}b } *{\color{green}s_j} \}_{1\le j \le m} $} (Alice) ;
 \node (skA) [below] at (Alice.south) [rectangle,draw=red!50,fill=red!20]{${\color{red}a_0}\in S_A, {\color{red}a}$};
 \node (skB) [below] at (Bob.south) [rectangle,draw=red!50,fill=red!20]{${\color{red}b}\in S_B$};
 \begin{pgfonlayer}{background}
  \node [fill=green!20, rounded corners, fit=(Alice) (Bob) (skA) (skB) (pA) (pB)] {};
 \end{pgfonlayer} 
 \end{tikzpicture}
\end{center}
\end{figure}
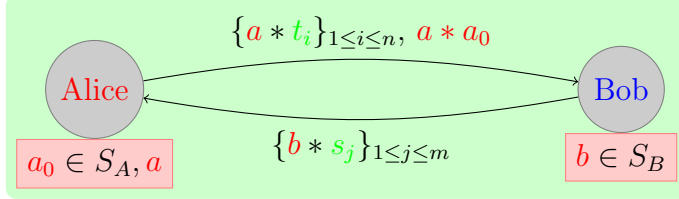

\subsection{Base problems} 
In order to break Protocol 1 an attacker has to find the shared key $K=K_A=K_B$.
A successful attack on Bob's secret key $b$ requires the solution of
\begin{list}{}{\setlength{\itemsep}{0cm} \setlength{\parsep}{0cm} }
\item[{\bf $m$-simLDP} ($m$-simultaneous LD-Problem):]
\item[{\sc Input:}]  Element pairs $(s_1,s'_1),\ldots ,(s_m,s'_m)\in L^2$ with $s'_i=b*s_i$ $\forall 1\le i\le m$ for some (unknown) $b \in L$.  
\item[{\sc Objective:}] Find $b'\in L$ with $b'*s_i=s'_i$ for all $i=1,\ldots ,m$.
\end{list}

Note that in our context, $b$ comes from a restricted domain, namely $S_A \subseteq L$. This might affect distributions when one considers possible attacks.
Nevertheless, we use the notion of (simultaneous) LD-Problem for inputs generated by potentially arbitrary $b\in L$. Similar remarks affect
base problems further in the text. 
\par 
Even if an attacker finds Bob's original key $b$ or a pseudo-key $b'$ (solution to the $m$-simLDP above), then she still faces the following problem.
\begin{list}{}{\setlength{\itemsep}{0cm} \setlength{\parsep}{0cm} }
\item[{\bf $*$-MSP} ($*$-submagma Membership Search Problem):]
\item[{\sc Input:}] $t_1,\ldots ,t_n\in (L,*)$, $b\in \langle t_1, \ldots , t_n \rangle _*$.
\item[{\sc Objective:}] Find an expression of $b$ as a tree-word in the submagma $\langle t_1,\ldots ,t_n \rangle _*$ (notation $b=T_*(u_1,\ldots ,u_k)$ for $u_i \in \{ t_j \} _{j\le n}$).   
\end{list}

\begin{prop}  \label{BaseProbs1KEP1}
Let $(L,*)$ be an LD-system. We define the generalized $m$-simLDP for $S_B\subseteq L$ as an $m$-simultaneous LD-Problem 
with the objective to find a $b'$ in $S_B=\langle t_1, \ldots , t_n \rangle _*$ such that $b'*s_i=s'_i$ for all $i\le m$. 
\par An oracle that solves the generalized $m$-simLDP and $*$-MSP for $S_B$ is sufficient to break key establishment Protocol 1.
\end{prop}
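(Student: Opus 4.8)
The plan is to show that an attacker equipped with the two oracles can reconstruct the shared key $K=K_A=K_B$ from the public transcript, by essentially replaying Bob's side of the protocol. The public data consists of $s_1,\ldots,s_m$, $t_1,\ldots,t_n$, the messages $a*t_1,\ldots,a*t_n$ and $p_0=a*a_0$ sent by Alice, and the messages $b*s_1,\ldots,b*s_m$ sent by Bob. The attacker's goal is to compute $K=(a*b)*p_0$.

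First I would apply the generalized $m$-simLDP oracle to the input pairs $(s_1,b*s_1),\ldots,(s_m,b*s_m)$, which by construction admits a solution in $S_B=\langle t_1,\ldots,t_n\rangle_*$ (namely Bob's genuine key $b$). The oracle returns some $b'\in S_B$ with $b'*s_i=b*s_i$ for all $i\le m$. Next I would feed $b'$, together with the generators $t_1,\ldots,t_n$, to the $*$-MSP oracle to obtain an explicit tree-word expression $b'=T'_*(u_1,\ldots,u_{k'})$ with each $u_j\in\{t_1,\ldots,t_n\}$. Using this expression and the public elements $a*t_j$, the attacker computes
\[ T'_*(a*u_1,\ldots,a*u_{k'}) = a*T'_*(u_1,\ldots,u_{k'}) = a*b', \]
where the first equality is the magma-homomorphism property of $\phi_a$ from Proposition \ref{LDendo}. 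Finally the attacker outputs $(a*b')*p_0$.

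It remains to verify that $(a*b')*p_0$ really equals the shared key $K$. Since $b'$ and $b$ both lie in $S_A$... more precisely, since $b,b'\in S_B=\langle t_1,\ldots,t_n\rangle_*$ and $a_0=T_*(r_1,\ldots,r_k)$ with $r_i\in\{s_1,\ldots,s_m\}$, the agreement $b'*s_i=b*s_i$ propagates through the tree-word: $b'*a_0 = T_*(b'*r_1,\ldots,b'*r_k) = T_*(b*r_1,\ldots,b*r_k) = b*a_0$, again by Proposition \ref{LDendo} applied to $\phi_{b'}$ and $\phi_b$. Hence, using left self-distributivity as in the protocol's correctness computation,
\[ (a*b')*p_0 = (a*b')*(a*a_0) \stackrel{(LD)}{=} a*(b'*a_0) = a*(b*a_0) = K_A = K. \]
This completes the argument.

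The one genuinely delicate point — and the reason the statement insists on the \emph{generalized} $m$-simLDP rather than the plain one — is the step $b'*a_0=b*a_0$: a pseudo-key $b'$ agreeing with $b$ only on the generators $s_i$ need not agree with $b$ on arbitrary elements of $L$, but it does agree on every element of the submagma $S_A$ generated by the $s_i$, because left multiplication is a magma endomorphism and $a_0\in S_A$. So I would take care to state clearly that the attack exploits $a_0\in S_A$ and that this is exactly what makes the two-oracle reduction go through; the rest is the same bookkeeping with trees and the $(LD)$ law that already appears in the description of Protocol 1.
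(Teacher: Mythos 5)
Your attack is correct and is essentially the paper's own proof: the generalized $m$-simLDP oracle supplies $b'\in S_B$ with $b'*s_i=b*s_i$, the $*$-MSP oracle supplies a tree-word for $b'$ in the $t_j$, and the LD law together with $b'*a_0=b*a_0$ recovers $K=(a*b')*p_0$. One correction to your closing remark: the step $b'*a_0=b*a_0$ does \emph{not} require $b'\in S_B$ at all --- it holds for any pseudo-key agreeing with $b$ on the generators $s_i$, by Proposition \ref{LDendo} and $a_0\in S_A$, exactly as you argue; the condition $b'\in S_B$ (i.e.\ the \emph{generalized} version of the problem) is instead what legitimizes feeding $b'$ to the $*$-MSP oracle for $S_B$ and what makes $a*b'$ computable from the published elements $a*t_j$.
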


\begin{proof} 
As outlined above, we perform an attack on Bob's private key. The generalized $m$-simLDP oracle provides a pseudo-key $b'\in S_B$ with  $b'*s_i=s'_i=b*s_i$ for all $i=1,\ldots ,m$. Observe that this implies for any element $e_A\in S_A$ that $b'*e_A=b*e_A$.
In particular, we have $b'*a_0=b*a_0$.
We feed this pseudo-key $b'$ into a $*$-MSP oracle for $S_B$ which returns a treeword  $T'_*(u_1,\ldots ,u_l)=b'$ (for some $l\in \mathbb{N}$ and $u_i \in \{ t_j \} _{j\le n}$). Now compute 
\begin{eqnarray*}
T'_*(a*u_1,\ldots ,a*u_l)*p_0 & \stackrel{LD}{=} & (a*T'_*(u_1,\ldots ,u_l))*p_0 = (a*b')*(a*a_0) \\
& \stackrel{LD}{=} &a*(b'*a_0)=a*(b*a_0)=K.  
\end{eqnarray*}
\end{proof}

Note that here the situation is asymmetric - an attack on Alice's secret key requires the solution of the following problem.
\begin{list}{}{\setlength{\itemsep}{0cm} \setlength{\parsep}{0cm} }
\item[{\bf $n$-modsimLDP} (Modified $n$-Simultaneous LD-Problem):]
\item[{\sc Input:}]  An element $p_0\in L$ and pairs $(t_1,t'_1),\ldots ,(t_n,t'_n)\in L^2$ with $t'_i=a*t_i$ $\forall 1\le i\le n$ for some (unknown) $a \in L$.  
\item[{\sc Objective:}] Find elements $a'_0, a'\in L$ such that $p_0=a'*a'_0$ and $a'*t_i=t'_i$ for all $i=1,\ldots ,n$.
\end{list}

Also here, even if an attacker finds Alice's original key $(a_0,a)$ or a pseudo-key $(a'_0, a') \in S_A \times L$, 
then she still faces a $*$-submagma Membership Search Problem.

\begin{prop} \label{BaseProbs2KEP1}
Let $(L,*)$ be an LD-system. We define the generalized $n$-modsimLDP for $S_A\subseteq L$ as a modified $n$-simultaneous LD-Problem 
with the objective to find $a'\in L$ and $a'_0$ in $S_A=\langle s_1, \ldots , s_m \rangle _*$ such that $a'*t_i=t'_i$ for all $i\le n$. 
\par An oracle that solves the generalized $n$-modsimLDP and $*$-MSP for $S_A$ is sufficient to break key establishment Protocol 1.
\end{prop}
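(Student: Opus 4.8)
The plan is to mirror the proof of Proposition~\ref{BaseProbs1KEP1}, carrying out the symmetric attack on Alice's secret key instead of Bob's. First I would invoke the generalized $n$-modsimLDP oracle for $S_A$ on the input $(p_0; (t_1,t_1'),\ldots,(t_n,t_n'))$ that the attacker reads off Alice's transmission, where $t_i' = a*t_i$ and $p_0 = a*a_0$. The oracle returns a pseudo-key $(a_0', a') \in S_A \times L$ with $a'*t_i = t_i' = a*t_i$ for all $i \le n$, and with $p_0 = a'*a_0'$.

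Next I would feed $a_0'$ into the $*$-MSP oracle for $S_A = \langle s_1,\ldots,s_m\rangle_*$, obtaining a treeword expression $a_0' = T_*(r_1,\ldots,r_l)$ with each $r_j \in \{s_1,\ldots,s_m\}$. The key observation, just as in the previous proposition, is that the oracle's guarantee on the $t_i$'s propagates to the whole submagma $S_B = \langle t_1,\ldots,t_n\rangle_*$: since $a'*t_i = a*t_i$ for all generators $t_i$, and both $\phi_{a'}$ and $\phi_a$ are magma endomorphisms of $(L,*)$ by Proposition~\ref{LDendo}, we get $a'*e_B = a*e_B$ for every $e_B \in S_B$; in particular $a'*b = a*b$, where $b \in S_B$ is Bob's secret.

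Now the attacker recovers the shared key. Using the messages $\{b*s_j\}_{j\le m}$ that Bob sent, and the treeword for $a_0'$, compute
\begin{eqnarray*}
T_*(b*r_1,\ldots,b*r_l) & = & b*T_*(r_1,\ldots,r_l) \;=\; b*a_0',
\end{eqnarray*}
and then apply $\phi_{a'}$:
\begin{eqnarray*}
a'*(b*a_0') & \stackrel{LD}{=} & (a'*b)*(a'*a_0') \;=\; (a*b)*p_0 \;=\; K_B \;=\; K,
\end{eqnarray*}
where the first equality is left self-distributivity, the second uses $a'*b = a*b$ and $a'*a_0' = p_0$, and the last is exactly Bob's computation in step~4 of Protocol~1. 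Hence the two oracles suffice to break the protocol.

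The only subtlety — the analogue of the ``main obstacle'' — is the same asymmetry already flagged in the text: the $*$-MSP here must be solved for $S_A$ rather than $S_B$, and the modsimLDP oracle must return a genuinely usable pair $(a_0', a')$ with $a_0'$ inside $S_A$ (so that it admits a treeword) and with $p_0 = a'*a_0'$ for the \emph{same} $a'$ that matches the $t_i$'s. Once one checks that these two constraints are exactly what the generalized $n$-modsimLDP was defined to deliver, the rest is the routine endomorphism-pushing argument above; no genuinely new difficulty arises beyond what Proposition~\ref{BaseProbs1KEP1} already handles.
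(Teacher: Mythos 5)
Your proof is correct and follows essentially the same route as the paper's: invoke the generalized $n$-modsimLDP oracle to get the pseudo-key $(a'_0,a')$, use the magma-endomorphism property to propagate $a'*t_i=a*t_i$ to $a'*b=a*b$, feed $a'_0$ to the $*$-MSP oracle for $S_A$, and recover $K$ via $a'*(b*a'_0)\stackrel{LD}{=}(a'*b)*(a'*a'_0)=(a*b)*p_0$. No discrepancies worth noting.
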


\begin{proof} 
As outlined above, we perform an attack on Alice's private key. The generalized $n$-simLDP oracle provides a pseudo-key $(a'_0, a')'\in S_A \times L$ such that $a'*a'_0=p_0$ and $a'*t_i=a'_i=a*t_i$ for all $i=1,\ldots ,n$. Observe that this implies for any element $e_B\in S_B$ that $a'*e_B=a*e_B$.
In particular, we have $a'*b=a*b$.
We feed the first component $a'_0 \in S_A$ of this pseudo-key into a $*$-MSP oracle for $S_A$ which returns a treeword  $T'_*(r_1,\ldots ,r_l)=a'_0$ (for some $l\in \mathbb{N}$ and $r_i \in \{ s_j \} _{j\le m}$). Now, we compute 
\begin{eqnarray*}
a'*T'_*(b*r_1,\ldots ,b*r_l) & \stackrel{LD}{=} & a'*(b*T'_*(r_1,\ldots ,r_l))= a'*(b*a'_0) \\
& \stackrel{LD}{=} &(a'*b)*(a'*a'_0)=(a*b)*p_0=K.   
\end{eqnarray*}
\end{proof}

Both appproaches described above require the solution of a $*$-submagma Membership Search Problem. Note that we assumed that the generalized $m$-simLDP 
(resp. $n$-modsimLDP) oracle already provides a pseudo-key in the submagma $S_B$ (resp. $S_A$) which we feed to the $*$-MSP oracle.
But to check whether an element lies in some submagma, i.e. the $*$-submagma Membership Decision Problem, is already undecidable in general. 
\par Fortunately, for the attacker, there are approaches which do not resort to solving the $*$-MSP. 

\medskip

Recall that we defined the generalized $m$-simLDP for $S_B\subseteq L$ as an $m$-simulta\-neous LD-Problem 
with the objective to find a $b'$ in $S_B=\langle t_1, \ldots , t_n \rangle _*$ such that $b'*s_i=s'_i$ for all $i\le m$.
\begin{prop}  \label{BaseProbs3KEP1} 
A generalized simLDP oracle is sufficient to break key establishment Protocol 1.
More precisely, an oracle that solves the generalized $m$-simLDP for $S_B$ {\it and} the $n$-simLDP is sufficient to break Protocol 1.
\end{prop}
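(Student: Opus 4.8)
The plan is to recover one pseudo-key for each party from the publicly transmitted data, and then to reassemble the shared key $K=K_A=a*(b*a_0)$ from those two pseudo-keys by a single use of left self-distributivity, so that --- unlike in Propositions \ref{BaseProbs1KEP1} and \ref{BaseProbs2KEP1} --- no submagma membership search ($*$-MSP) is needed at all.

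First I would feed the $n$ pairs $(t_1,a*t_1),\ldots,(t_n,a*t_n)$, which the attacker reads directly off Alice's message, into the $n$-simLDP oracle; it returns some $a'\in L$ with $a'*t_i=a*t_i$ for all $i\le n$. By Proposition \ref{LDendo} the maps $\phi_{a'}$ and $\phi_a$ are magma endomorphisms of $(L,*)$ coinciding on the generating set $\{t_1,\ldots,t_n\}$ of $S_B$, and since a magma homomorphism commutes with tree-word formation they coincide on all of $S_B$. Next I would feed the $m$ pairs $(s_1,b*s_1),\ldots,(s_m,b*s_m)$, read off Bob's message, into the generalized $m$-simLDP oracle for $S_B$; it returns some $b'\in S_B$ with $b'*s_j=b*s_j$ for all $j\le m$, so by the same reasoning $\phi_{b'}$ and $\phi_b$ coincide on $S_A$. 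Since $a_0\in S_A$ and $b'\in S_B$ this gives $b'*a_0=b*a_0$ and $a*b'=a'*b'$, hence
\[ K=a*(b*a_0)=a*(b'*a_0)\stackrel{LD}{=}(a*b')*(a*a_0)=(a'*b')*p_0, \]
where the last equality uses $a*a_0=p_0$ together with $a*b'=a'*b'$. The right-hand side involves only the recovered $a'$, $b'$ and the public $p_0$, so it is computable and the protocol is broken.

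The step that requires care --- and the reason the two oracles cannot be exchanged --- is the asymmetry in where the pseudo-keys must lie. The element $b'$ must genuinely belong to $S_B$, for otherwise $\phi_a$ and $\phi_{a'}$ need not agree on it and the identity $a*b'=a'*b'$ collapses; this is exactly what the \emph{generalized} $m$-simLDP for $S_B$ delivers and what a plain $m$-simLDP would not. By contrast $a'$ enters only through its action on the generators of $S_B$, so any solution $a'\in L$ of the ordinary $n$-simLDP is good enough. Apart from this, the only ingredient is the routine induction over tree-depth stated just before Proposition \ref{LDendo}, namely that two magma homomorphisms agreeing on a generating set agree on the submagma it generates; I do not expect any real difficulty there. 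The content of the proposition is the observation that the $*$-MSP oracle needed in Proposition \ref{BaseProbs1KEP1} can be traded for a second instance of the LD-problem.
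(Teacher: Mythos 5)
Your proposal is correct and follows essentially the same route as the paper's own proof: use the plain $n$-simLDP oracle for $a'$, the generalized $m$-simLDP oracle for $b'\in S_B$, deduce $a'*b'=a*b'$ and $b'*a_0=b*a_0$ from agreement of the left-multiplication endomorphisms on the respective submagmas, and evaluate $(a'*b')*p_0=K$ via one application of the LD law. The added remark on why the two oracles play asymmetric roles is accurate but does not change the argument.
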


\begin{proof} 
Here we perform attacks on Alice's and Bob's private keys - though we need only a pseudo-key for the second component $a'$ of Alice's key. 
The $n$-simLDP oracle provides $a'\in L$ s.t. $a'*t_j=t'_j=a*t_j$ for all $j\le n$. 
And the generalized $m$-simLDP oracle returns the pseudo-key $b'\in S_B$
s.t. $b'*s_i=s'_i=b*s_i$ for all $i\le m$. Since $b'\in S_B$, we conclude that $a'*b'=a*b'$.
Also, $a_0\in S_A$ implies, of course, $b'*a_0=b*a_0$.
Now, we may compute 
\[ (a'*b')*p_0 =(a*b')*(a*a_0)\stackrel{LD}{=}a*(b'*a_0)=a*(b*a_0)=K. \]
\end{proof}

Recall that we defined the generalized $n$-modsimLDP for $S_A\subseteq L$ as an $n$-simultaneous LD-Problem 
with the objective to find a $a'_0$ in $S_A=\langle s_1, \ldots , s_m \rangle _*$ such that $a'*t_i=t'_i$ for all $i\le n$.

\begin{prop}  \label{BaseProbs4KEP1} 
An oracle that solves the generalized $n$-modsimLDP for $S_A$ {\it and} the $m$-simLDP is sufficient to break Protocol 1.
\end{prop}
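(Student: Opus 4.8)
The plan is to mirror the proof of Proposition \ref{BaseProbs3KEP1}, swapping the roles of Alice and Bob. As in that proof, the attacker runs both oracles: the $m$-simLDP oracle on Bob's transmitted data $\{(s_j, b*s_j)\}_{j\le m}$ produces some $b'\in L$ (not necessarily in $S_B$) with $b'*s_j = b*s_j$ for all $j\le m$; and the generalized $n$-modsimLDP oracle for $S_A$, fed the input $p_0$ together with the pairs $\{(t_i, a*t_i)\}_{i\le n}$, produces a pseudo-key $(a_0', a')\in L\times S_A$ with $a'*t_i = a*t_i$ for all $i\le n$ and $p_0 = a'*a_0'$.

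The key observation is the same collapsing trick used throughout: since the left multiplication maps are magma endomorphisms (Proposition \ref{LDendo}), agreement on generators forces agreement on the whole submagma. Concretely, $a'$ agrees with $a$ on every generator $t_i$ of $S_B$, hence $a'*e = a*e$ for every $e\in S_B$; in particular $a'*b = a*b$. Symmetrically, $b'$ agrees with $b$ on every generator $s_j$ of $S_A$, and since the recovered $a_0'$ lies in $S_A$, we get $b'*a_0' = b*a_0'$. (We do not separately need $b'*a_0 = b*a_0$, because the modsimLDP oracle has handed us its own first component $a_0'\in S_A$, which is exactly what $p_0$ decomposes against.) First I would assemble these facts, then compute, using left self-distributivity,
\[
(a'*b')*p_0 = (a'*b')*(a'*a_0') \stackrel{LD}{=} a'*(b'*a_0') = a'*(b*a_0') = (a'*b)*(a'*a_0') \stackrel{LD}{=} (a'*b)*p_0 = (a*b)*p_0 = K.
\]
Alternatively, and perhaps more cleanly, one rewrites $(a'*b')*p_0 = a'*(b'*a_0')$ via the LD-law, then replaces $b'*a_0'$ by $b*a_0'$, then $a'$ by its action: $a'*(b*a_0') = (a'*b)*(a'*a_0') = (a*b)*p_0 = K_B = K$. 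Either route closes the argument.

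The only real subtlety — and the step I would state carefully rather than wave at — is bookkeeping about which pseudo-key components are guaranteed to lie in which submagma, since this determines exactly which of the endomorphism-collapsing identities are licensed. The $n$-modsimLDP oracle is the \emph{generalized} one, so its first output $a_0'$ is promised to lie in $S_A$; that is what makes $b'*a_0' = b*a_0'$ valid and what makes $p_0 = a'*a_0'$ available. The $m$-simLDP oracle here is \emph{not} the generalized version (it need not return an element of $S_B$), but that is harmless: we never need $b'$ to lie in $S_B$, only that $a'$ agrees with $a$ on all of $S_B$, which follows purely from $a'*t_i = a*t_i$ and Proposition \ref{LDendo}. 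Once this is spelled out, the computation is a one-line application of the LD-law exactly parallel to Proposition \ref{BaseProbs3KEP1}, and no genuine obstacle remains.
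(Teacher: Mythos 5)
Your proof is correct and follows essentially the same route as the paper's: you use the plain $m$-simLDP to get $b'$, the generalized $n$-modsimLDP to get $(a'_0,a')$ with $a'_0\in S_A$, derive $b'*a'_0=b*a'_0$ and $a'*b=a*b$, and close with the same LD-computation $a'*(b'*a'_0)=(a*b)*p_0=K$. (Only note the typo ``$(a_0',a')\in L\times S_A$'' --- it should be $S_A\times L$, as your own later discussion makes clear.)
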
 

\begin{proof}
Also here we perform attacks on Alice's and Bob's private keys.  
The $m$-simLDP oracle provides $b'\in L$ s.t. $b'*s_j=s'_j=b*s_j$ for all $j\le m$. 
And the generalized $n$-modsimLDP oracle returns the pseudo-key $(a'_0, a')\in S_A \times L$
s.t. $a'*t_i=t'_i=a*t_i$ for all $i\le n$ {\it and} $a'*a'_0=p_0$. Since $a'_0\in S_A$, we conclude that $b'*a'_0=b*a'_0$.
Also, $b\in S_B$ implies, of course, $a'*b=a*b$.
Now, we compute 
\[ a'*(b'*a'_0) =a'*(b*a'_0)\stackrel{LD}{=}(a'*b)*(a'*a'_0)=(a*b)*p_0=K.   \]
\end{proof}

\section{Key establishment for left distributive systems}
\subsection{The protocol}  \label{KEPpartial}
Here we describe a generalization of Protocol 1 that works for all multi-LD-systems.
Actually, it suffices if $L$ is only a partial multi-LD-system, i.e. some distributive laws hold.
More precisely, consider a set $L$ equipped with a pool of binary operations $O_A \cup O_B$ ($O_A$ and $O_B$ non-empty) s.t.
the operations in $O_A$ are distributive over those in $O_B$ and vice versa, i.e. the following holds
for all $x,y,z\in L$, $*_{\alpha } \in O_A$ and $*_{\beta }\in O_B$.
\begin{eqnarray}
  x*_{\alpha }(y*_{\beta }z)&=&(x*_{\alpha }y)*_{\beta }(x*_{\alpha }z), \,\, {\rm and}    \label{abLD} \\
  x*_{\beta }(y*_{\alpha }z)&=&(x*_{\beta }y)*_{\alpha }(x*_{\beta }z).                    \label{baLD}
\end{eqnarray}
Note that, if $O_A \cap O_B \ne \emptyset$, then $(L, O_A \cup O_B)$ is a multi-LD-system. 
\par Let $s_1, \ldots , s_m, t_1, \ldots , t_n\in L$ be some public elements. We denote
$S_A=\langle s_1, \cdots , s_m \rangle _{O_A}$ and $S_B=\langle t_1, \cdots , t_n \rangle _{O_B}$, two submagmas of $(L,O_A\cup O_B)$.
For example, an element $y$ of $S_A$ can be described by a planar rooted binary
tree $T$ whose $k$ leaves are labelled by these other elements $r_1,\ldots ,r_k$ with $r_i \in \{s_i\}_{i\le m}$.
Here the tree contains further information, namely to each internal vertex we assign a binary operation $*_i \in O_A$.
We use the notation $y=T_{O_A}(r_1,\ldots ,r_k)$. 
The subscript $O_A $ tells us that the grafting of subtrees of $T$ corresponds to the operation $*_i\in O_A$.
Consider, for example, the element $y=((s_3*_{\alpha _2}s_3)*_{\alpha _4}s_1)*_{\alpha _1}(s_2*_{\alpha _2}s_1)$. The corresponding
labelled planar rooted binary tree $T$ is displayed in the following figure.

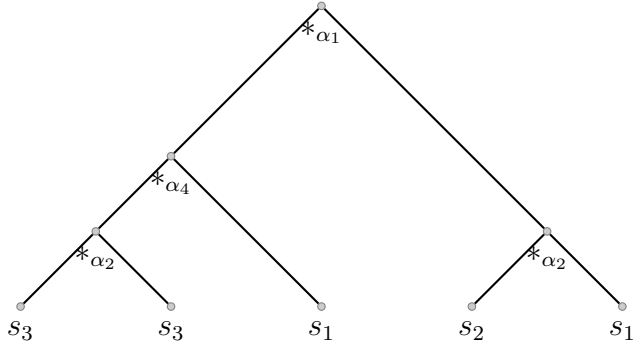
\begin{figure}[ht]
  \caption{ The element $y=((s_3*_{\alpha _2}s_3)*_{\alpha _4}s_1)*_{\alpha _1}(s_2*_{\alpha _2}s_1) \in S_A$}
\begin{center}
\begin{tikzpicture}
   \node (r1) at (0,0) [circle,inner sep=1pt,draw=black!50,fill=black!20]{}; \node[below] at (r1.south) {$s_3$};  
   \node (r2) at (2,0) [circle,inner sep=1pt,draw=black!50,fill=black!20]{}; \node[below] at (r2.south) {$s_3$};  
   \node (r3) at (4,0) [circle,inner sep=1pt,draw=black!50,fill=black!20]{}; \node[below] at (r3.south) {$s_1$};  
   \node (r4) at (6,0) [circle,inner sep=1pt,draw=black!50,fill=black!20]{}; \node[below] at (r4.south) {$s_2$};  
   \node (r5) at (8,0) [circle,inner sep=1pt,draw=black!50,fill=black!20]{}; \node[below] at (r5.south) {$s_1$};  
   \node (i12) at (1,1) [circle,inner sep=1pt,draw=black!50,fill=black!20]{}
          edge[thick] (r1)   edge[thick] (r2);     \node[below] at (i12.south) {$*_{\alpha _2}$};  
   \node (i13) at (2,2) [circle,inner sep=1pt,draw=black!50,fill=black!20]{}
          edge[thick] (i12)   edge[thick] (r3);     \node[below] at (i13.south) {$*_{\alpha _4}$};  
   \node (i45) at (7,1) [circle,inner sep=1pt,draw=black!50,fill=black!20]{}
          edge[thick] (r4)   edge[thick] (r5);     \node[below] at (i45.south) {$*_{\alpha _2}$};  
   \node (i15) at (4,4) [circle,inner sep=1pt,draw=black!50,fill=black!20]{}
          edge[thick] (i13)   edge[thick] (i45);   \node[below] at (i15.south) {$*_{\alpha _1}$}; 
\end{tikzpicture}
\end{center}
\end{figure}

Let $*_{\alpha }\in O_A$ and $*_{\beta }\in O_B$. By induction over the tree depth, it is easy to show that, for all elements $e, e_1, \ldots , e_l \in (L, O_A \cup O_B)$ and all planar rooted binary trees $T$ with $l$ leaves, the following equations hold.
\begin{eqnarray}
e*_{\alpha }T_{O_B}(e_1, \ldots , e_l)&=&T_{O_B}(e*_{\alpha }e_1, \ldots , e*_{\alpha }e_l),   \\
e*_{\beta }T_{O_A}(e_1, \ldots , e_l)&=&T_{O_A}(e*_{\beta }e_1, \ldots , e*_{\beta }e_l).
\end{eqnarray}

Now, we are going to describe a KEP that applies to any system $(L,O_A\cup O_B)$ as described above.
We have two subsets of public elements $\{ s_1, \cdots , s_m \}$ and $\{t_1, \cdots , t_n \}$ of $L$.
Also, recall that $S_A=\langle s_1, \cdots , s_m \rangle _{O_A}$ and $S_B=\langle t_1, \cdots , t_n \rangle _{O_B}$.
Alice and Bob perform the following protocol steps.

\begin{description}
\item[{\bf Protocol 2}] {\sc Key establishment for the partial multi-LD-system} \par
$(L,O_A\cup O_B)$.
\item[{\rm 1}] Alice generates her secret key $(a_0,a, *_{\alpha }) \in S_A \times L \times O_A$, and Bob chooses his secret key 
   $(b, *_{\beta })\in S_B \times O_B$.
\item[{\rm 2}] Alice computes the elements $a*_{\alpha }t_1, \ldots ,a*_{\alpha }t_n, p_0=a*_{\alpha }a_0 \in L$, and sends them to Bob. 
Bob computes $b*_{\beta }s_1, \ldots ,b*_{\beta }s_m\in L$, and sends them to Alice. 
\item[{\rm 3}] Alice, knowing $a_0=T_{O_A}(r_1, \ldots , r_k)$ with $r_i\in \{s_1,\ldots ,s_m\}$, computes from Bob's public key
\[ T_{O_A}(b*_{\beta }r_1, \ldots , b*_{\beta }r_k)=b*_{\beta }T_{O_A}(r_1, \ldots , r_k)=b*_{\beta }a_0. \]
And Bob, knowing $b=T'_{O_B}(u_1, \ldots , u_{k'})$ with $u_j\in \{t_1,\ldots ,t_n\}$, computes from Alice's public key
\[ T'_{O_B}(a*_{\alpha }u_1, \ldots , a*_{\alpha }u_{k'})=a*_{\alpha }T'_{O_B}(u_1, \ldots , u_{k'})=a*_{\alpha }b. \]
\item[{\rm 4}] Alice computes $K_A=a*_{\alpha }(b*_{\beta }a_0)$.
Bob gets the shared key by 
\[ K_B:=(a*_{\alpha }b)*p_0=(a*_{\alpha }b)*_{\beta }(a*_{\alpha }a_0)\stackrel{LD}{=}K_A.  \]
\end{description}

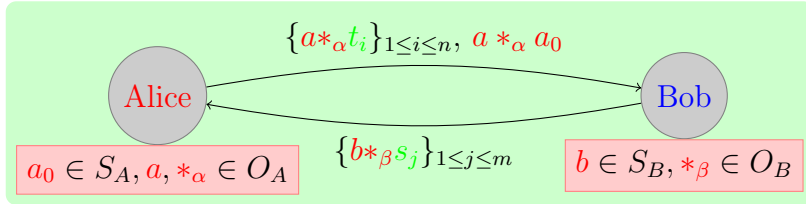
\begin{figure}[ht]
  \caption{{\sc KEP for the partial multi-LD-system} $(L,O_A\cup O_B)$.}
\begin{center} 
 \begin{tikzpicture}
 \node[red]  (Alice) at ( 0,0) [circle,draw=black!50,fill=black!20]{Alice};
 \node[blue]  (Bob) at ( 7,0) [circle,draw=black!50,fill=black!20]{Bob}
   edge [<-, bend right=10] node[auto,swap] (pA) {$\{ {\color{red}a *_{\alpha }} {\color{green}t_i} \}_{1\le i \le n}, \, {\color{red}a*_{\alpha }a_0} $} (Alice)   
   edge [->, bend left=10] node[auto] (pB) { $\{ {\color{red}b *_{\beta }} {\color{green}s_j} \}_{1\le j \le m} $} (Alice) ;
 \node (skA) [below] at (Alice.south) [rectangle,draw=red!50,fill=red!20]{${\color{red}a_0}\in S_A, {\color{red}a}, {\color{red}*_{\alpha }} \in O_A$};
 \node (skB) [below] at (Bob.south) [rectangle,draw=red!50,fill=red!20]{${\color{red}b}\in S_B, {\color{red}*_{\beta }} \in O_B$};
 \begin{pgfonlayer}{background}
  \node [fill=green!20, rounded corners, fit=(Alice) (Bob) (skA) (skB) (pA) (pB)] {};
 \end{pgfonlayer} 
 \end{tikzpicture}
\end{center}
\end{figure}

Here the operations $*_{\alpha } \in O_A$ and $*_{\beta }\in O_B$ are part of Alice's and Bob's private keys.
As in Protocol 1, explicit expressions of $a_0\in S_A$ and $b \in S_B$ as treewords $T,T'$ are also parts of the private keys - though we did not 
mention it explicitly in step 1 of the protocols. But here $T_{O_A}$ and $T'_{O_B}$ also contain all the information about the grafting operations
(in $O_A$ or $O_B$, respectively) at the internal vertices of $T$, $T'$.

\subsection{Base problems}

In order to break Protocol 2 an attacker has to find the shared key $K=K_A=K_B$.
A successful attack on Bob's secret key $(b, *_{\beta })$ requires (first) the solution of the following problem.
\begin{list}{}{\setlength{\itemsep}{0cm} \setlength{\parsep}{0cm} }
\item[{\sc Input:}]  Element pairs $(s_1,s'_1),\ldots ,(s_m,s'_m)\in L^2$ with $s'_i=b*_{\beta }s_i$ $\forall 1\le i\le m$ for some (unknown) $b \in L$,
   $*_{\beta }\in O_B$.  
\item[{\sc Objective:}] Find $b'\in L$ and   $*_{\beta '}\in O_B$ such that $b'*_{\beta '}s_i=s'_i$ for all $i=1,\ldots ,m$.
\end{list}

In order to clarify concepts we introduce the following notation, which also makes it easier to name our base problems at hand.
For $e\in L$, let $\phi _{e, \alpha }(x):=e*_{\alpha }x$. Then, for $*_{\alpha }\in O_A$, $\phi _{e, \alpha }$ is by (\ref{abLD}) a magma homomorphism on $S_B$.
Analogeously, for $*_{\beta }\in O_B$, $\phi _{e, \beta } \in End(S_A)$ by (\ref{baLD}).
Now, we may reformulate the base problem for obtaining a pseudo-key on Bob's secret.
\begin{list}{}{\setlength{\itemsep}{0cm} \setlength{\parsep}{0cm} }
\item[{\bf LDEndP} (LD-endomorphism Search Problem for $S_A$):]
\item[{\sc Input:}]  Element pairs $(s_1,s'_1),\ldots ,(s_m,s'_m)\in L^2$ with $s'_i=\phi _{b, \beta }(s_i)$ $\forall 1\le i\le m$ for some (unknown) 
magma endomorphism $\phi _{b, \beta }\in End(S_A)$ (with $*_{\beta }\in O_B$).  
\item[{\sc Objective:}] Find magma endomorphism $\phi _{b', \beta '}\in End(S_A)$ ($*_{\beta '}\in O_B$) such that $\phi _{b', \beta '}(s_i)=s'_i$ for all $i=1,\ldots ,m$.
\end{list}

Recall that we work in the leftdistributive system $(L, O_A \cup O_B)$.
We define the {\it generalized LDEndP} for $(S_A, S_B)$ as an LD-endomorphism Search Problem for $S_A$ 
with the objective to find a magma endomorphism $\phi _{b', \beta '}\in End(S_A)$ with $*_{\beta '}\in O_B$ {\it and} $b'$ in $S_B=\langle t_1, \ldots , t_n \rangle _{O_B}$.  
\par
Even if an attacker finds a pseudo-key endomorphism $\phi_{b',\beta '} \in End(S_A)$, then she still faces the following problem.
\begin{list}{}{\setlength{\itemsep}{0cm} \setlength{\parsep}{0cm} }
\item[{\bf $O_B$-MSP} ($O_B$-submagma Membership Search Problem for $S_B$):]
\item[{\sc Input:}] $t_1,\ldots ,t_n\in L$, $b\in S_B=\langle t_1, \ldots , t_n \rangle _{O_B}$.
\item[{\sc Objective:}] Find an expression of $b$ as a tree-word (with internal vertices labelled by operations in $O_B$) in the submagma $S_B$ (notation \\ $b=T_{O_B}(u_1,\ldots ,u_k)$ for $u_i \in \{ t_j \} _{j\le n}$).  
\end{list}

\begin{prop}  \label{BaseProbs1KEP2}
An oracle that solves the generalized LDEndP for $(S_A, S_B)$ and $O_B$-MSP for $S_B$ is sufficient to break key establishment Protocol 2.
\end{prop}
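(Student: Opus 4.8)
The plan is to run the obvious multi-operation analogue of the attack in the proof of Proposition~\ref{BaseProbs1KEP1}, aimed at Bob's secret key $(b,*_{\beta })$. First I would hand the pairs $(s_1,s_1'),\ldots ,(s_m,s_m')$, where $s_i'=\phi _{b,\beta }(s_i)=b*_{\beta }s_i$, to the generalized LDEndP oracle for $(S_A,S_B)$. By hypothesis it returns a pseudo-key endomorphism $\phi _{b',\beta '}\in End(S_A)$ with $*_{\beta '}\in O_B$ and $b'\in S_B=\langle t_1,\ldots ,t_n\rangle _{O_B}$ such that $\phi _{b',\beta '}(s_i)=s_i'$ for all $i\le m$.

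The decisive point, exactly as in the LD-case, is that both $\phi _{b,\beta }$ and $\phi _{b',\beta '}$ lie in $End(S_A)$ by distributivity law~(\ref{baLD}), and they agree on the generating set $\{s_1,\ldots ,s_m\}$ of $S_A$; since a magma homomorphism respects tree-words, they therefore agree on all of $S_A$. As $a_0\in S_A$, this gives $b'*_{\beta '}a_0=\phi _{b',\beta '}(a_0)=\phi _{b,\beta }(a_0)=b*_{\beta }a_0$.

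Next I would feed $b'$ to the $O_B$-MSP oracle for $S_B$, obtaining an $O_B$-treeword $b'=T'_{O_B}(u_1,\ldots ,u_l)$ with $u_i\in \{t_j\}_{j\le n}$. For each $i$ the element $a*_{\alpha }u_i$ coincides with one of the values $a*_{\alpha }t_j=t_j'$ transmitted by Alice, so it is public; and since $\phi _{a,\alpha }$ is a magma homomorphism on $S_B$ by~(\ref{abLD}), the attacker can compute
\[ T'_{O_B}(a*_{\alpha }u_1,\ldots ,a*_{\alpha }u_l)=\phi _{a,\alpha }\bigl(T'_{O_B}(u_1,\ldots ,u_l)\bigr)=a*_{\alpha }b'. \]
Applying the (publicly known) operation $*_{\beta '}$ to this value and to $p_0=a*_{\alpha }a_0$, and using that $*_{\alpha }\in O_A$ is distributive over $*_{\beta '}\in O_B$, we obtain
\begin{eqnarray*}
(a*_{\alpha }b')*_{\beta '}p_0 &=& (a*_{\alpha }b')*_{\beta '}(a*_{\alpha }a_0) \stackrel{(\ref{abLD})}{=} a*_{\alpha }(b'*_{\beta '}a_0) \\
&=& a*_{\alpha }(b*_{\beta }a_0)=K_A=K,
\end{eqnarray*}
so the shared key is recovered. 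Note that at no step does the attacker need Alice's secret operation $*_{\alpha }$ itself: every expression of the form $a*_{\alpha }(\cdot )$ that occurs is either one of Alice's transmitted values or $p_0$.

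I do not expect a genuine obstacle here; the argument is a bookkeeping translation of Proposition~\ref{BaseProbs1KEP1} to a pool of operations. The one thing to watch is that the operation $*_{\beta '}$ returned by the LDEndP oracle need not be Bob's original $*_{\beta }$ — what saves the argument is only that $*_{\beta '}\in O_B$, which is exactly what makes $\phi _{b',\beta '}$ an element of $End(S_A)$ via~(\ref{baLD}) and what lets us apply the distributivity identity~(\ref{abLD}) with $*_{\alpha }$ over $*_{\beta '}$ in the final computation. Equally, one must use that $\phi _{b',\beta '}$ is compatible with \emph{all} of $O_A$, not just one operation, in order for agreement on the generators $s_i$ to propagate to $a_0$.
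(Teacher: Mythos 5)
Your proposal is correct and follows essentially the same route as the paper's own proof: query the generalized LDEndP oracle for a pseudo-key $\phi_{b',\beta'}$ with $b'\in S_B$, observe that agreement on the generators $s_i$ propagates to $a_0$, feed $b'$ to the $O_B$-MSP oracle for a tree-word, and recover $K$ via $T'_{O_B}(a*_{\alpha}u_1,\ldots,a*_{\alpha}u_l)*_{\beta'}p_0$. The extra care you take in justifying the propagation step (compatibility of $\phi_{b',\beta'}$ with all of $O_A$) and in noting that $*_{\beta'}$ need not equal $*_{\beta}$ is implicit in, but fully consistent with, the paper's argument.
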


\begin{proof}
As outlined above, we perform an attack on Bob's private key. The generalized LDEndP for $(S_A, S_B)$ oracle provides a pseudo-key endomorphism
$\phi _{b', \beta '}\in End(S_A)$ with  $b'\in S_B$, $*_{\beta '}\in O_B$ such that  $\phi _{b', \beta '}(s_i)=s'_i=\phi _{b, \beta }(s_i)$
for all $i=1,\ldots ,m$. Observe that this implies for any element $e_A\in S_A$ that $\phi _{b', \beta '}(e_A)=\phi _{b, \beta }(e_A)$.
In particular, we have $\phi _{b', \beta '}(a_0)=\phi _{b, \beta }(a_0)$.
Since $b'\in S_B$, we may feed $b'$ into a $O_B$-MSP oracle for $S_B$ which returns a tree-word  $T'_{O_B}(u_1,\ldots ,u_l)=b'$ (for some $l\in \mathbb{N}$ and $u_i \in \{ t_j \} _{j\le n}$). Now, we compute 
\begin{eqnarray*}
&& T'_{O_B}(a*_{\alpha }u_1,\ldots ,a*_{\alpha }u_l)*_{\beta '}p_0 \stackrel{LD}{=}  (a*_{\alpha }T'_{O_B}(u_1,\ldots ,u_l))*_{\beta '}p_0 \\
&=& (a*_{\alpha }b')*_{\beta '}(a*_{\alpha }a_0) 
 \stackrel{LD}{=} a*_{\alpha }(b'*_{\beta '}a_0)=a*_{\alpha }(b*_{\beta }a_0)=K.   
\end{eqnarray*}
\end{proof}

On the other hand, an attack on Alice's secret key requires (first) the solution of the following problem.
\begin{list}{}{\setlength{\itemsep}{0cm} \setlength{\parsep}{0cm} }
\item[{\bf modLDEndP} (Modified LD-endomorphism Search Problem for $S_B$):]
\item[{\sc Input:}]  Element pairs $(t_1,t'_1),\ldots ,(t_n,t'_n)\in L^2$ with $t'_i=\phi _{a, \alpha }(t_i)$ $\forall 1\le i\le n$ for some (unknown) 
magma endomorphism $\phi _{a, \alpha }\in End(S_B)$ (with $*_{\alpha }\in O_A$). Furthermore, an element $p_0 \in \phi _{a, \alpha }(S_A)$, i.e.
$p_0=\phi _{a, \alpha }(a_0)$ for some $a_0\in S_A$.
\item[{\sc Objective:}] Find $(a'_0, \phi _{a', \alpha '})\in L \times End(S_B)$ ($*_{\alpha }\in O_A$) such that $\phi _{a', \alpha '}(t_i)=t'_i$ for all $i=1,\ldots ,n$ {\it and} $\phi _{a', \alpha '}(a'_0)=p_0$.
\end{list}

We define the {\it generalized modLDEndP} for $(S_B, S_A)$ as a modified LD-endo\-morphism Search Problem for $S_B$ 
with the objective to find $(a'_0, \phi _{a', \alpha '})\in S_A \times End(S_B)$ ($*_{\alpha }\in O_A$) such that $\phi _{a', \alpha '}(t_i)=t'_i$ for all $i=1,\ldots ,n$ {\it and} $\phi _{a', \alpha '}(a'_0)=p_0$.  
\par Even if an attacker finds a pseudo-key $(a'_0, \phi _{a', \alpha '}) \in S_A \times End(S_B)$ for Alice's secret, 
then she still faces a $O_A$-submagma Membership Search Problem for $S_A$.

\begin{prop}  \label{BaseProbs2KEP2}
An oracle that solves the generalized modLDEndP for 
$(S_B,$ \\ $S_A)$ and $O_A$-MSP for $S_A$ is sufficient to break key establishment Protocol 1.
\end{prop}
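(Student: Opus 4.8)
The plan is to mirror the proof of Proposition~\ref{BaseProbs1KEP2}, with the roles of Alice and Bob interchanged: the attacker targets Alice's private key $(a_0,a,*_{\alpha})$, extracts a usable pseudo-key, and then rebuilds the shared secret $K$ from the two public messages. As in Protocol~1, once the two oracle outputs are available the argument is a routine manipulation of the left distributive laws of $(L,O_A\cup O_B)$, so I expect no genuine difficulty in the calculation itself; the real conceptual weight lies in the use of the membership-type oracle, discussed at the end.

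First I would feed the generalized modLDEndP oracle for $(S_B,S_A)$ with Alice's public data --- the pairs $(t_i,t'_i)$ with $t'_i=a*_{\alpha}t_i=\phi_{a,\alpha}(t_i)$ together with the element $p_0=a*_{\alpha}a_0=\phi_{a,\alpha}(a_0)$ --- and obtain a pair $(a'_0,\phi_{a',\alpha'})\in S_A\times End(S_B)$, $*_{\alpha'}\in O_A$, with $\phi_{a',\alpha'}(t_i)=t'_i$ for all $i\le n$ and $\phi_{a',\alpha'}(a'_0)=p_0$. The decisive observation, exactly as in Proposition~\ref{BaseProbs1KEP2}, is that $\phi_{a',\alpha'}$ and $\phi_{a,\alpha}$ are magma endomorphisms of $S_B=\langle t_1,\ldots,t_n\rangle_{O_B}$ that agree on the generators $t_1,\ldots,t_n$, hence agree on all of $S_B$; since $b\in S_B$ this gives $a'*_{\alpha'}b=a*_{\alpha}b$, even though $*_{\alpha'}$, $a'$ need not equal $*_{\alpha}$, $a$.

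Next, since $a'_0\in S_A$, I would pass it to the $O_A$-MSP oracle for $S_A$, obtaining an explicit treeword $a'_0=T'_{O_A}(r_1,\ldots,r_l)$ with $r_i\in\{s_j\}_{j\le m}$ and internal vertices labelled by operations in $O_A$. Using Bob's public message $\{b*_{\beta}s_j\}_{j\le m}$ and the termwise distribution of an $O_B$-operation over $O_A$-trees, the attacker then computes, without knowing $b$ or $*_{\beta}$ individually,
\begin{eqnarray*}
a'*_{\alpha'}T'_{O_A}(b*_{\beta}r_1,\ldots,b*_{\beta}r_l) &\stackrel{LD}{=}& a'*_{\alpha'}(b*_{\beta}T'_{O_A}(r_1,\ldots,r_l)) \\
&=& a'*_{\alpha'}(b*_{\beta}a'_0) \stackrel{LD}{=} (a'*_{\alpha'}b)*_{\beta}(a'*_{\alpha'}a'_0) \\
&=& (a*_{\alpha}b)*_{\beta}p_0=K_B=K,
\end{eqnarray*}
where the second $\stackrel{LD}{=}$ is the distributivity~(\ref{abLD}) of $*_{\alpha'}\in O_A$ over $*_{\beta}\in O_B$, and the last line uses $a'*_{\alpha'}b=a*_{\alpha}b$ together with $a'*_{\alpha'}a'_0=\phi_{a',\alpha'}(a'_0)=p_0$ and the definition of $K_B$.

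The one delicate point --- and the reason the $O_A$-MSP oracle is really needed --- is that applying Bob's public homomorphism $b*_{\beta}(\cdot)$ termwise requires an explicit $O_A$-treeword for $a'_0$ over the generators of $S_A$, not just the bare element $a'_0$. Moreover, as in the remarks following Propositions~\ref{BaseProbs1KEP1} and~\ref{BaseProbs2KEP1}, the generalized modLDEndP oracle is tacitly assumed to hand over $a'_0$ already inside $S_A$, which sidesteps the $O_A$-submagma membership decision problem, undecidable in general. This ``membership-for-free'' assumption, rather than any step of the computation, is where I expect the main obstacle to sit; and, just as for Protocol~1 in Propositions~\ref{BaseProbs3KEP1}--\ref{BaseProbs4KEP1}, one could state a companion result trading the $O_A$-MSP oracle for a further LDEndP-type oracle.
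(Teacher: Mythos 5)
Your proof is correct and follows essentially the same route as the paper's own argument: query the generalized modLDEndP oracle for the pseudo-key $(a'_0,\phi_{a',\alpha'})$, note that agreement on the generators $t_i$ forces $a'*_{\alpha'}b=a*_{\alpha}b$, feed $a'_0$ to the $O_A$-MSP oracle for a treeword, and recover $K=(a*_{\alpha}b)*_{\beta}p_0$ by the same two applications of the distributive laws. The additional remarks on why the $O_A$-MSP oracle is genuinely needed and on the tacit membership assumption are consistent with the paper's surrounding discussion.
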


\begin{proof}
As outlined above, we perform an attack on Alice's private key. The generalized modLDEndP oracle provides a pseudo-key 
$(a'_0, \phi _{a', \alpha '}) \in S_A \times End(S_B)$
such that $\phi _{a', \alpha '}(t_i)=t'_i=\phi _{a, \alpha }(t_i)$ for all $i=1,\ldots ,n$ {\it and} $\phi _{a', \alpha '}(a'_0)=p_0$. 
Observe that this implies for any element $e_B\in S_B$ that 
 $\phi _{a', \alpha '}(e_B)=\phi _{a, \alpha }(e_B)$.
In particular, we have $\phi _{a', \alpha '}(b)=\phi _{a, \alpha }(b)$.
Since $a'_0 \in S_A$, we may feed $a'_0$ into a $O_A$-MSP oracle for $S_A$ which returns a tree-word  
$T'_{O_A}(r_1,\ldots ,r_l)=a'_0$ (for some $l\in \mathbb{N}$ and $r_i \in \{ s_j \} _{j\le m}$). Now, we may compute 
\begin{eqnarray*}
&& a'*_{\alpha '}T'_{O_A}(b*_{\beta }r_1,\ldots ,b*_{\beta }r_l)  \stackrel{LD}{=}  a'*_{\alpha '}(b*_{\beta }T'_{O_A}(r_1,\ldots ,r_l))  \\
&=& a'*_{\alpha '}(b*_{\beta }a'_0) \stackrel{LD}{=} (a'*_{\alpha '}b)*_{\beta }(a'*_{\alpha '}a'_0)=(a*_{\alpha }b)*_{\beta }p_0=K.   
\end{eqnarray*}
\end{proof}

Now, we describe approaches to break Protocol 2 which do not resort to solving a submagma-MSP. 

\begin{prop} \label{BaseProbs3KEP2} 
A generalized LDEndP oracle is sufficient to break key establishment Protocol 2.
More precisely, an oracle that solves the generalized LDEndP for $(S_A, S_B)$ {\it and} the LDEndP for $S_B$ is sufficient to break KEP1.
\end{prop}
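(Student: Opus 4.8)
The plan is to mirror the proof of Proposition~\ref{BaseProbs3KEP1}, adapting it from the single-operation LD-setting to the partial multi-LD-system $(L,O_A\cup O_B)$. We are attacking both Alice's and Bob's private keys, but, just as in the LD-case, we only need a pseudo-key for the second component $a'$ of Alice's key together with a compatible operation $*_{\alpha'}\in O_A$; the first component $a_0$ and its tree-word expression are circumvented by using $p_0$ directly. First I would invoke the LDEndP oracle for $S_B$ on the input pairs $(t_1,t'_1),\ldots,(t_n,t'_n)$ (with $t'_j=\phi_{a,\alpha}(t_j)$); it returns a magma endomorphism $\phi_{a',\alpha'}\in End(S_B)$ with $*_{\alpha'}\in O_A$ such that $\phi_{a',\alpha'}(t_j)=t'_j=\phi_{a,\alpha}(t_j)$ for all $j\le n$. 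Since $S_B=\langle t_1,\ldots,t_n\rangle_{O_B}$ and both $\phi_{a',\alpha'}$ and $\phi_{a,\alpha}$ are magma endomorphisms of $S_B$ agreeing on the generators, they agree on all of $S_B$; in particular $a'*_{\alpha'}b=a*_{\alpha}b$.

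Next I would invoke the generalized LDEndP oracle for $(S_A,S_B)$ on the pairs $(s_1,s'_1),\ldots,(s_m,s'_m)$ (with $s'_i=\phi_{b,\beta}(s_i)$); it returns $\phi_{b',\beta'}\in End(S_A)$ with $*_{\beta'}\in O_B$ \emph{and} $b'\in S_B$ such that $\phi_{b',\beta'}(s_i)=s'_i=\phi_{b,\beta}(s_i)$ for all $i\le m$. By the same generators-determine-the-endomorphism argument on $S_A=\langle s_1,\ldots,s_m\rangle_{O_A}$, we get $b'*_{\beta'}e_A=b*_{\beta}e_A$ for every $e_A\in S_A$; in particular $b'*_{\beta'}a_0=b*_{\beta}a_0$. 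Because $b'\in S_B$ and $a_0\in S_A$, the cross-relations apply to both computed quantities: $a'*_{\alpha'}b'=a*_{\alpha}b'$ (as $b'\in S_B$ and $\phi_{a',\alpha'},\phi_{a,\alpha}$ agree on $S_B$) and $b'*_{\beta'}a_0=b*_{\beta}a_0$ (as $a_0\in S_A$).

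Finally I would compute, using the distributivity law (\ref{baLD}) of $*_{\alpha'}\in O_A$ over $*_{\beta'}\in O_B$,
\[
(a'*_{\alpha'}b')*_{\beta'}p_0=(a*_{\alpha}b')*_{\beta'}(a*_{\alpha}a_0)\stackrel{LD}{=}a*_{\alpha}(b'*_{\beta'}a_0)=a*_{\alpha}(b*_{\beta}a_0)=K,
\]
where the first equality uses $a'*_{\alpha'}b'=a*_{\alpha}b'$ and $p_0=a*_{\alpha}a_0$, and the last step uses $b'*_{\beta'}a_0=b*_{\beta}a_0$ together with $K=K_A=a*_{\alpha}(b*_{\beta}a_0)$. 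This reconstructs the shared key from public data plus the two oracle outputs, which is what is required.

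The main subtlety — the analogue of the only real content in Proposition~\ref{BaseProbs3KEP1} — is keeping track of \emph{which} operation decorates each application of $\phi$ and of the submagma on which a given endomorphism is guaranteed to be well-defined: $\phi_{a',\alpha'}$ is an endomorphism of $S_B$ only because $*_{\alpha'}\in O_A$ and (\ref{abLD}) holds, while $\phi_{b',\beta'}$ is an endomorphism of $S_A$ only because $*_{\beta'}\in O_B$ and (\ref{baLD}) holds. One must check that the element $a*_{\alpha}b'=a'*_{\alpha'}b'$ fed into the outer $*_{\beta'}$ really lies where the law (\ref{baLD}) can be applied, i.e. that the final distributivity step is an instance of an $O_A$-over-$O_B$ law with $*_{\alpha'}\in O_A$ and $*_{\beta'}\in O_B$ — which it is by construction. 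No genuinely hard estimate or new idea is needed beyond this bookkeeping; the argument is a direct transcription of the LD-case with the operations carried along.
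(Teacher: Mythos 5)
Your proof is correct and follows essentially the same route as the paper's own argument: query the LDEndP oracle for $S_B$ to get $\phi_{a',\alpha'}$ agreeing with $\phi_{a,\alpha}$ on $S_B$, query the generalized LDEndP oracle for $(S_A,S_B)$ to get $\phi_{b',\beta'}$ with $b'\in S_B$ agreeing with $\phi_{b,\beta}$ on $S_A$, and then recover $K$ via the same final computation $(a'*_{\alpha'}b')*_{\beta'}p_0=K$. The only (inconsequential) slip is that the final distributivity step is an instance of (\ref{abLD}) ($O_A$ over $O_B$), not (\ref{baLD}).
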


\begin{proof}
Here we perform attacks on Alice's and Bob's private keys - though we do not require a pseudo-key for the first component $a_0$ of Alice's key. 
The LDEndP oracle for $S_B$ provides $\phi _{a', \alpha '}$ s.t. $\phi _{a', \alpha '}(t_j)=t'_j=\phi _{a, \alpha }(t_j)$ for all $j\le n$. 
And the generalized LDEndP oracle for $(S_A,S_B)$ returns the pseudo-key endomorphism $\phi _{b', \beta '}$ with $b'\in S_B$
s.t. $\phi _{b', \beta '}(s_i)=s'_i=\phi _{b, \beta }(s_i)$ for all $i\le m$. Since $b'\in S_B$, 
we conclude that $\phi _{a', \alpha '}(b')=\phi _{a, \alpha }(b')$.
Also, $a_0\in S_A$ implies, of course, $\phi _{b', \beta '}(a_0)=\phi _{b, \beta }(a_0)$.
Now, we compute 
\[ (a'*_{\alpha '}b')*_{\beta '}p_0 =(a*_{\alpha }b')*_{\beta '}(a*_{\alpha }a_0)\stackrel{LD}{=}a*_{\alpha }(b'*_{\beta '}a_0)=a*_{\alpha }(b*_{\beta }a_0)=K. \]
\end{proof}

Alternatively, one may choose the following approach.
\begin{prop}  \label{BaseProbs4KEP2} 
An oracle that solves the generalized modLDEndP for
$(S_B,$ \\ $S_A)$ {\it and} the LDEndP for $S_A$ is sufficient to break KEP1.
\end{prop}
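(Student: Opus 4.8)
The plan is to mirror the proof of Proposition~\ref{BaseProbs4KEP1} for Protocol~1, transported into the partial multi-LD setting, and in particular to perform the \emph{Alice-side} key recombination $a'*_{\alpha'}(b'*_{\beta'}a'_0)$ (outermost operation in $O_A$), rather than the Bob-side recombination $(a*_\alpha b)*_\beta p_0$ used in Propositions~\ref{BaseProbs1KEP2} and~\ref{BaseProbs3KEP2}. As in Proposition~\ref{BaseProbs3KEP2} we attack both parties' secrets simultaneously, but we never resort to a submagma membership search: the two oracles named in the statement already deliver everything needed to recover the shared key $K=K_A=K_B$.

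First I would run the two oracles on the public data. The LDEndP oracle for $S_A$, fed the pairs $(s_j,s'_j)$ with $s'_j=\phi_{b,\beta}(s_j)$, returns a magma endomorphism $\phi_{b',\beta'}\in\mathrm{End}(S_A)$ (with $*_{\beta'}\in O_B$) satisfying $\phi_{b',\beta'}(s_j)=s'_j$ for all $j\le m$; note that this \emph{plain} version imposes no membership requirement on $b'$. The \emph{generalized} modLDEndP oracle for $(S_B,S_A)$, fed the pairs $(t_i,t'_i)$ with $t'_i=\phi_{a,\alpha}(t_i)$ together with $p_0$, returns a pseudo-key $(a'_0,\phi_{a',\alpha'})\in S_A\times\mathrm{End}(S_B)$ (with $*_{\alpha'}\in O_A$) satisfying $\phi_{a',\alpha'}(t_i)=t'_i$ for all $i\le n$ \emph{and} $\phi_{a',\alpha'}(a'_0)=p_0$. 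The decisive feature of the generalized version is that it guarantees $a'_0\in S_A$; this membership is exactly what makes the final collapse go through.

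Next I would upgrade these generator-wise agreements to agreements on the whole submagmas. Since a magma homomorphism is determined by its values on the generators (the induction statement recorded just before Proposition~\ref{LDendo}, applicable because $\phi_{b',\beta'}$ and $\phi_{a',\alpha'}$ are homomorphisms by~(\ref{baLD}) and~(\ref{abLD}) respectively), the endomorphisms $\phi_{b',\beta'}$ and $\phi_{b,\beta}$ agree on all of $S_A$, and $\phi_{a',\alpha'}$ and $\phi_{a,\alpha}$ agree on all of $S_B$. In particular, because $a'_0\in S_A$ we obtain $b'*_{\beta'}a'_0=b*_\beta a'_0$, and because $b\in S_B$ we obtain $a'*_{\alpha'}b=a*_\alpha b$.

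Finally I would assemble the key. Substituting the first identity, distributing $*_{\alpha'}\in O_A$ over $*_\beta\in O_B$ via~(\ref{abLD}), then substituting $a'*_{\alpha'}b=a*_\alpha b$ and $a'*_{\alpha'}a'_0=p_0$, and closing with $p_0=a*_\alpha a_0$ and one more application of~(\ref{abLD}) in the reverse direction, one computes
\begin{eqnarray*}
a'*_{\alpha'}(b'*_{\beta'}a'_0)&=&a'*_{\alpha'}(b*_\beta a'_0)\stackrel{(\ref{abLD})}{=}(a'*_{\alpha'}b)*_\beta(a'*_{\alpha'}a'_0)\\
&=&(a*_\alpha b)*_\beta p_0=(a*_\alpha b)*_\beta(a*_\alpha a_0)\\
&\stackrel{(\ref{abLD})}{=}&a*_\alpha(b*_\beta a_0)=K.
\end{eqnarray*}
The main obstacle to watch is not any hard computation but the bookkeeping: I must track which operation decorates each product so that both distributive steps are legitimate $O_A$-over-$O_B$ instances of~(\ref{abLD}), and I must have the two membership facts in hand ($a'_0\in S_A$ from the \emph{generalized} oracle, $b\in S_B$ by construction) before invoking the generator-to-submagma extension. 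Without them the pseudo-endomorphisms would be known to match $\phi_{b,\beta},\phi_{a,\alpha}$ only on the generators, not on the specific elements $a'_0$ and $b$ appearing in the assembly, and the argument would collapse.
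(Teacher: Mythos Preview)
Your proof is correct and follows essentially the same route as the paper: invoke the LDEndP oracle for $S_A$ to obtain $\phi_{b',\beta'}$, the generalized modLDEndP oracle for $(S_B,S_A)$ to obtain $(a'_0,\phi_{a',\alpha'})\in S_A\times\mathrm{End}(S_B)$ with $\phi_{a',\alpha'}(a'_0)=p_0$, extend the generator-wise agreements to $a'_0\in S_A$ and $b\in S_B$, and then compute $a'*_{\alpha'}(b'*_{\beta'}a'_0)=(a*_\alpha b)*_\beta p_0=K$. Your version is more explicit than the paper's in justifying the extension step and in unpacking the final equality $(a*_\alpha b)*_\beta p_0=K$ via Protocol~2, but the argument is the same.
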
 

\begin{proof}
Also here we perform attacks on Alice's and Bob's private keys.  
The LDEndP oracle for $S_A$ provides $\phi _{b', \beta '}\in End(S_A)$ s.t. $\phi _{b', \beta '}(s_j)=s'_j=\phi _{b', \beta '}(s_j)$ for all $j\le m$. 
And the generalized modLDEndP oracle for $(S_B,S_A)$ returns the pseudo-key $(a'_0, \phi _{a', \alpha '}) \in S_A \times End(S_B)$
s.t. $\phi _{a', \alpha '}(t_i)=t'_i=\phi _{a', \alpha '}(t_i)$ for all $i\le n$ {\it and} $\phi _{a', \alpha '}(a'_0)=p_0$. Since $a'_0\in S_A$, we conclude that $\phi _{b', \beta '}(a'_0)=\phi _{b, \beta }(a'_0)$.
Also, $b\in S_B$ implies, of course, $\phi _{a', \alpha '}(b)=\phi _{a, \alpha }(b)$.
Now, we compute 
\[ a'*_{\alpha '}(b'*_{\beta '}a'_0) =a'*_{\alpha '}(b*_{\beta }a'_0)\stackrel{LD}{=}(a'*_{\alpha '}b)*_{\beta }(a'*_{\alpha '}a'_0)=
(a*_{\alpha }b)*_{\beta }p_0=K.   \]
\end{proof}

\begin{rem}
Note that in the non-associative setting the case $m=n=1$ is of particular interest, i.e. we may \emph{abandon simultaneity} in our base problems
since the submagmas generated by one element are still complicated objects. 
\end{rem}

\section{Instantiations using shifted conjugacy}

\subsection{Protocol 1}
Consider the infinite braid group $(B_{\infty },*)$ with shifted conjugacy as LD-operation. Then the LD-Problem is a
\emph{simultaneous shifted conjugacy problem}.
For $m=n=1$ this becomes the \emph{shifted conjugacy problem} (see e.g. \cite{De06}) which was first solved in \cite{KLT09}
by a double reduction, first to the subgroup conjugacy problem for $B_{n-1}$ in $B_n$, then to an instance of the simultaneous conjugacy problem. 
For the simultaneous conjugacy problem in braid groups we refer to \cite{LL02,KT13}.
If we replace shifted conjugacy by generalized shifted conjugacy, then the corresponding LD-problem still reduces to a subgroup conjugacy problem for
a standard parabolic subgroup of a braid group. Such problems were first solved in a more general framework, namely for Garside subgroups of Garside groups, in \cite{KLT10}. 
Though not explicitly stated in \cite{KLT09, KLT10}, the simultaneous shifted conjugacy problem and its analogue for generalized shifted conjugacy
may be treated by similar methods as in \cite{KLT09, KLT10}.
Though these solutions provide only deterministic algorithms with exponential worst case complexity,
they may still affect the security of Protocol 1 if we use such LD-systems in braid groups as platform LD-systems. 
Moreover, efficient heuristic approaches to the shifted conjugacy problem were developed in \cite{LU08,LU09}.
Therefore, we doubt whether an instantiation of Protocol 1 using shifted conjugacy in braid groups provides a secure KEP.

\subsection{Protocol 2} \label{P2}
Here we propose a natural instantiation of Protocol 2 using generalized shifted conjugacy in braid groups.
Consider the following natural partial multi-LD-system $(B_{\infty }, O_A \cup O_B)$ in braid groups. 
\par 
Let $1<q_1<q_2<p$ such that $q_1, p-q_2 \ge 3$. 
Let any $*_{\alpha }\in O_A$ be of the form
$x*_{\alpha }y= \partial ^p(x^{-1}) \alpha \partial ^p(y)x$ with $\alpha =\alpha _1 \tau _{p,p} \alpha _2$ for some $\alpha _1 \in B_{q_1}$,
$\alpha _2 \in B_{q_2}$.
Analogously, any $*_{\beta }\in O_B$ is of the form $x*_{\beta }y= \partial ^p(x^{-1}) \beta \partial ^p(y)x$ with 
$\beta = \beta _1 \tau _{p,p} \beta _2$ for some $\beta _1  \partial ^{q_2} \in (B_{p-q_2})$, $\beta _2 \in \partial ^{q_1}(B_{p-q_1})$.
Since $[\alpha _1, \beta _1]=[\alpha _1, \beta _2]=[\beta _1, \alpha _2]=1$, the equations (4) and (5) are satisfied. 
Note that, if in addition we have $[\alpha _1, \alpha _2]=[\beta _1, \beta _2]=1$, then 
$(B_{\infty }, *_{\alpha } , *_{\beta })$ is a bi-LD-system according to Proposition \ref{abc} (c).
But in general these additional commutativity relations do not hold for our choice of standard parabolic subgroups as domains for 
$\alpha _1, \alpha _2, \beta _1, \beta _2$. Note that, if we restrict $\alpha _2, \beta _2$ to $\partial ^{q_1}(B_{q_2-q_1})$, 
then these additional relations are enforced.
Anyway, they are not necessary for (4), (5) to hold.
In either case, $\alpha _2$ does not need to commute with $\beta _2$.  
\par  
Then Alice and Bob perform the protocol steps of Protocol 2 for the partial multi-LD-system $(B_{\infty }, O_A \cup O_B)$ 
as described in section \ref{KEPpartial}. 
\par 
The deterministic algorithms from \cite{KLT09, KLT10} do not affect the security of this instantiation of Protocol 2, 
because the \emph{operations are part of the secret}.
More precisely, the LD-endomorphism Search Problem for $S_A$ specifies to the following particular 
simultaneous decomposition problem.
\begin{list}{}{\setlength{\itemsep}{0cm} \setlength{\parsep}{0cm} }
\item[{\sc Input:}]  Element pairs $(s_1,s'_1),\ldots ,(s_m,s'_m)\in B_{\infty }^2$ with 
$$ s'_i=\partial ^p(b^{-1}) \beta _1 \tau _{p,p} \beta _2 \partial ^p(s_i)b$$
for all $i$, $1\le i\le m$, for some (unknown) $b \in B_{\infty }$, $\beta _1 \in \partial ^{q_2}(B_{p-q_2})$, 
$\beta _2 \in \partial ^{q_1}(B_{p-q_1})$.  
\item[{\sc Objective:}] Find $b' \in B_{\infty }$, $\beta '_1 \in \partial ^{q_2} (B_{p-q_2})$, $\beta '_2 \in \partial ^{q_1}(B_{p-q_1})$ such that $$s'_i=\partial ^p((b')^{-1}) \beta '_1 \tau _{p,p} \beta '_2 \partial ^p(s_i)b'$$ for all $i=1,\ldots ,m$.
\end{list}

If we abandon simultaneity, i.e. in the case $m=1$, we obtain a \emph{special decomposition problem}.
In the following section we transform this particular problem to an instance of an apparently new group-theoretic search problem.

\subsection{Conjugacy coset problem}

\begin{defi} \label{SCCP}
Let $H, K$ be subgroups of a group $G$. We call the following problem the subgroup conjugacy coset problem (SCCP) for $(H,K)$ in $G$.
\begin{list}{}{\setlength{\itemsep}{0cm} \setlength{\parsep}{0cm} }
\item[{\sc Input:}]  An element pair $(x,y) \in G^2$ such that $x^G \cap Hy \ne \emptyset $.
\item[{\sc Objective:}]  Find elements $h \in H$ and $c \in K$ such that $cxc^{-1}=hy$.
\end{list}
If $K=G$ then we call this problem the conjugacy coset problem (CCP) for $H$ in $G$.
\end{defi}

This is the search (or witness) version of this problem. The corresponding decision problem is to decide whether
the conjugacy class of $x$ and the left $H$-coset of y intersect, i.e. whether $x^G \cap Hy \stackrel{?}{=} \emptyset $.
Anyway, in our cryptographic context we usually deal with search problems. 
\par 
It is clear from the definition that the SCCP is harder than the double coset problem (DCP) and the subgroup conjugacy problem (subCP), i.e., 
an oracle that solves SCCP for any pair $(H,K)\le G^2$ also solves DCP and subCP.  
\par 
Though the CCP and the SCCP are natural group-theoretic problems, they seem to have attracted little attention in combinatorial group theory so far.
At least we weren't able to find them in the literature.  
\par 
We connect the special decomposition problem from the previous section to the SCCP.

\begin{prop} \label{SDP}
The special decomposition problem (for $m=1$) from section \ref{P2} is equivalent to an instance of SCCP for some standard parabolic subgroups in braid groups,
namely the SCCP for $(\partial ^{q_1}(B_{p-q_1}) \cdot \partial ^{N-p+q_2}(B_{p-q_2}), B_{N-p})$ in $B_N$ for some $N \in \mathbb{N}$.
\end{prop}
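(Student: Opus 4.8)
The plan is to start from the special decomposition problem in the case $m=1$: given a single pair $(s,s') \in B_\infty^2$ with
\[ s' = \partial^p(b^{-1}) \beta_1 \tau_{p,p} \beta_2 \partial^p(s) b \]
for unknown $b \in B_\infty$, $\beta_1 \in \partial^{q_2}(B_{p-q_2})$, $\beta_2 \in \partial^{q_1}(B_{p-q_1})$, recover such data. Since all the elements appearing carry only finitely many strands, I would first pin down an ambient finite braid group $B_N$ that contains $s$, $s'$, the fixed element $\tau_{p,p} \in B_{2p}$, the parabolic subgroups $\partial^{q_2}(B_{p-q_2})$ and $\partial^{q_1}(B_{p-q_1})$, and the image $\partial^p(B_\infty \cap B_?)$ of any relevant conjugator; concretely one takes $N$ large enough that $b$ may be assumed to lie in $B_{N-p}$, so that $\partial^p(b) \in \partial^p(B_{N-p}) \le B_N$ and $\partial^p(s), \partial^p(s') \in B_N$ as well. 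This reduction to a fixed $B_N$ is harmless because a solution in $B_N$ is a solution in $B_\infty$, and any $B_\infty$-solution can be conjugated/truncated into such a window.

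Next I would rewrite the defining equation to isolate a conjugation. Set $x := \tau_{p,p}\,\partial^p(s)$ and $y := \partial^p(s')$ — or some such normalization — and observe that the equation $s' = \partial^p(b^{-1})\,\beta_1\tau_{p,p}\beta_2\,\partial^p(s)\,b$ can be massaged, using that $\beta_1$ commutes with $\partial^p(s)$ and with everything in the ``$s$-window'' (because $\beta_1 \in \partial^{q_2}(B_{p-q_2})$ sits in a parabolic disjoint from the strands touched by $\partial^p(s)$ when $q_1, p-q_2 \ge 3$), into the form
\[ c\, x\, c^{-1} = h\, y \]
where $c = \partial^p(b) \in \partial^p(B_{N-p})$, i.e. $c$ ranges over the standard parabolic $B_{N-p} \le B_N$ (after identifying $\partial^p(B_{N-p})$ with a standard parabolic via a shift — here one has to be a little careful that $\partial^p$ of a standard parabolic is again conjugate to a standard parabolic, or simply choose the strand labelling so that it is standard), and $h$ is a product of $\beta_1$ and $\beta_2$ moved across, lying in the subgroup $H := \partial^{q_1}(B_{p-q_1}) \cdot \partial^{N-p+q_2}(B_{p-q_2})$. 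The point of the two-sided product in $H$ is precisely that $\beta_1$ (shifted up to the ``top'' of the $N$-strand picture, giving the factor $\partial^{N-p+q_2}(B_{p-q_2})$) and $\beta_2$ (staying near the ``bottom'', giving $\partial^{q_1}(B_{p-q_1})$) end up in two parabolic subgroups supported on disjoint strand intervals, so they commute and $H$ is genuinely their internal product. Once in the form $cxc^{-1} = hy$ with $c \in K := B_{N-p}$ and $h \in H$, this is by Definition \ref{SCCP} exactly an instance of the SCCP for $(H,K)$ in $B_N$; the hypothesis $x^{B_N} \cap Hy \ne \emptyset$ is guaranteed because the original instance was constructed from a genuine $(b,\beta_1,\beta_2)$.

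Finally I would check the converse direction — that a solution $(h,c)$ of this SCCP instance yields a solution $(b',\beta_1',\beta_2')$ of the special decomposition problem — which amounts to reading the above manipulations backwards: given $c \in B_{N-p}$ set $b' := \partial^{-p}(c)$ (legitimate since $c$ lies in the shifted parabolic), factor $h \in H$ uniquely as $h = \partial^{q_1}(\beta_2')\cdot \partial^{N-p+q_2}(\beta_1'')$ using the internal direct product structure, un-shift to recover $\beta_1', \beta_2'$ in the required parabolics, and verify the decomposition identity holds. I expect the main obstacle to be the bookkeeping in this second and third step: getting the shifts and strand ranges to line up so that (i) $\beta_1$ really does commute past $\partial^p(s)$ and past $\tau_{p,p}$ in the right way, (ii) the two halves of $H$ land on disjoint strand intervals (this is where the numerical constraints $1 < q_1 < q_2 < p$ and $q_1, p-q_2 \ge 3$ are used, and where the specific exponent $N-p+q_2$ comes from), and (iii) the conjugator subgroup $\partial^p(B_{N-p})$ is literally the standard parabolic $B_{N-p}$ under the chosen labelling rather than merely conjugate to it. None of these is deep, but the equivalence claim is only as clean as the choice of $N$ and the normalization of the shift maps, so that is where I would spend the care; drawing the braid pictures, as the paper repeatedly recommends, is the honest way to verify the commutations.
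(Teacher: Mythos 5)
Your outline follows the same route as the paper: fix $N$ with $b\in B_{N-p}$, massage the defining identity into the form $cxc^{-1}=hy$, and read off $H$ and $K$. You have also correctly located where the two factors of $H$ come from ($\beta_1$ pushed to the top of the picture, giving $\partial^{N-p+q_2}(B_{p-q_2})$; $\beta_2$ ending at the bottom in $\partial^{q_1}(B_{p-q_1})$). But the step you defer — ``identify $\partial^p(B_{N-p})$ with a standard parabolic via a shift, or simply choose the strand labelling so that it is standard'' — is not bookkeeping; it is the entire proof. The shift $\partial^p$ is not an inner automorphism of $B_N$, so there is no ``relabelling'' you can apply to one factor of the equation in isolation: the paper instead multiplies the whole identity on the left by $\tau_{p,N-p}^{-1}$ and uses the three relations $\tau_{p,N-p}^{-1}\partial^p(b^{-1})=b^{-1}\tau_{p,N-p}^{-1}$, $\tau_{p,p}\beta_2=\partial^p(\beta_2)\tau_{p,p}$, and $\tau_{p,N-p}^{-1}\tau_{p,p}=\partial^p(\tau_{p,N-2p}^{-1})$. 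This single move simultaneously (i) turns the conjugator $\partial^p(b)$ into $b\in B_{N-p}$, (ii) sends $\beta_1\in\partial^{q_2}(B_{p-q_2})$ to $\partial^{N-p}(\beta_1)\in\partial^{N-p+q_2}(B_{p-q_2})$ and $\partial^p(\beta_2)$ back down to $\beta_2\in\partial^{q_1}(B_{p-q_1})$, and (iii) absorbs the leftover $\tau$-factors into $\tilde s=\partial^p(\tau_{p,N-2p}^{-1}s)$, yielding $b\tilde s'b^{-1}=\tilde\beta\tilde s$ with $\tilde s'=\tau_{p,N-p}^{-1}s'$. Until you verify that one conjugation produces exactly the stated $H$ and a genuinely standard $K=B_{N-p}$, the equivalence is asserted rather than proved.

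Two smaller corrections. First, the mechanism you invoke — ``$\beta_1$ commutes with $\partial^p(s)$ \ldots when $q_1,p-q_2\ge 3$'' — is a red herring: that commutation holds trivially because $\beta_1\in B_p$ and $\partial^p(s)$ are supported on disjoint strands, it is not used in the reduction, and the inequalities $q_1,p-q_2\ge 3$ play no role in this proposition (they matter for the distributivity laws (4)--(5) and for security, not here). Second, your trial normalization $x=\tau_{p,p}\partial^p(s)$, $y=\partial^p(s')$ has the roles of $s$ and $s'$ reversed relative to what the computation forces ($x=\tilde s'$ is the conjugated element, $y=\tilde s$ the coset representative); this is harmless only because you flagged it as provisional. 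The converse direction you describe (factoring $h$ through the internal product of the two disjointly supported parabolics and un-shifting) is fine and matches the fact that the paper's chain consists of equivalences.
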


\begin{proof}
For $m=1$, we write $s=s_m$ and $s'=s'_m$.
Let $N \in \mathbb{N}$ be sufficiently large such that $s', \partial ^p(s) \in B_N$. For convenience, we choose a minimal $N$ such that $N\ge 2p$.
As in \cite{KLT09} we conclude that $b\in B_{N-p}$ and $\partial ^p(b^{-1}) \in \partial ^p (B_{N-p})$. Therefore we have
$$\tau _{p,N-p}^{-1} \partial ^p(b^{-1})=b^{-1}\tau _{p,N-p}^{-1}. $$
Furthermore, since $\tau _{p,p}\beta _2=\partial ^p(\beta _2) \tau _{p,p}$ and
$\tau _{p,N-p}^{-1}\tau _{p,p}=\partial ^p (\tau _{p,N-2p}^{-1})$ for $N\ge 2p$\footnote{If $N<2p$ then $\tau _{p,N-p}^{-1}\tau _{p,p}=\partial ^p (\tau _{p,2p-N})$. But for generic instances $N$ is expected to be much larger than $2p$.}, 
we get
\[ \begin{array}{rcll}
s' &=& \partial ^p(b^{-1}) \beta _1 \tau _{p,p} \beta _2 \partial ^p(s)b  & \Leftrightarrow \\
\tau _{p,N-p}^{-1}s' &=& b^{-1} \tau _{p,N-p}^{-1} \beta _1 \partial ^p(\beta _2) \tau _{p,p} \partial ^p(s)b & \\
 &=&  b^{-1} \partial ^{N-p} (\beta _1)\beta _2 \tau _{p,N-p}^{-1}\tau _{p,p} \partial ^p(s)b & \Leftrightarrow \\
 b\tilde{s}' b^{-1} &=& \tilde{\beta } \cdot \tilde{s} &
\end{array} \]
with $\tilde{s}'=\tau _{p,N-p}^{-1}s'$, $\tilde{s}=\partial ^p (\tau _{p,N-2p}^{-1} s)$, and 
$$\tilde{\beta }=\partial ^{N-p} (\beta _1)\beta _2 
\in \partial ^{q_1}(B_{p-q_1}) \cdot \partial ^{N-p+q_2}(B_{p-q_2}).$$
\end{proof}

\par Recall that the algorithms from \cite{KLT09, KLT10}, as well as from \cite{GKLT13}, only solve
instances of the subgroup conjugacy problem for parabolic subgroups of braid groups, partially by transformation to the simultaneous conjugacy problem
in braid groups \cite{KTV13}.
No deterministic or even heuristic solution to the SCCP for (standard) parabolic subgroups in braid groups is known yet. 

\medskip

{\bf Open problem.} Find a solution to the SCCP, or even the CCP, for (standard) parabolic subgroups in the braid group $B_N$. 

\medskip

The CCP (and the SCCP) appear to be inherently quad\-ratic, i.e. we do not see how it may be linearized such that linear algebra attacks
as the \emph{linear centralizer attack} of B. Tsaban \cite{Ts12} apply. It shares this feature with Y. Kurt's {\it Triple Decomposition Problem} 
(see section 4.2.5. in \cite{MSU11}).  

\begin{rem}
The reader might be slightly disappointed that we did not offer a proposal with concerete parameter values to get excited about.
The reasons are twofold. First, the purpose of this article is to provide a general scheme how to get from any LD-, multi-LD-, or even other 
left distributive system a key establishment protocol. In this sense we provide a variety of KEP instantiations. 
\par 
Second, and this is the main reason, our proposal is not the end of the story. Indeed, in an upcoming paper \cite{KT13}
we suggest further improved KEPs for all LD- and multi-LD-systems etc., namely, systems based on iterated versions of the LD-problem.
There we will provide more concrete proposals, even efficient instantiations in finite groups which we do not consider as secure platforms for
Protocols 1 and 2.  
Nevertheless, the reader might feel free to attack, for example, the instantiation of Protocol 2 using generalized shifted conjugacy in braid groups for some small parameter values which still resist brute force attack. 
\par 
Further ideas and open problems for instantiating Protocol 1 and 2 are contained in \cite{Ka12}. 
\end{rem}

{\bf Acknowledgements.} The first author acknowledges financial support by the Minerva Foundation of Germany. 
Parts of the paper were written up during the stay of the first author at University of Queensland, Brisbane.
For this stay the first author acknowledges financial support by the Australian Research Council (project DP110101104). 
\par 
The second author acknowledges financial support by The Oswald Veblen Fund. 
We thank Boaz Tsaban for encouragement and fruitful discussions.
We are grateful to Ciaran Mullan for many useful comments.

\end{document}